\newif\ifislncs
\islncsfalse

\ifislncs 
\documentclass{../llncs/llncs}
\else
\documentclass[11pt]{article}
\usepackage[margin=1.5in]{geometry}
\fi

\usepackage{ifthen}

\newboolean{lncs}   
\ifislncs 
\setboolean{lncs}{true}
\else
\setboolean{lncs}{false}
\fi

\newboolean{short}   
\setboolean{short}{true}
\newboolean{linenumbers}   
\setboolean{linenumbers}{false}

%To ensure that proofs end with a proper QED symbol when using the LNCS LaTeX template, include the following before \begin{document}:
%from http://www.win.tue.nl/GD2011/submission.html
\ifthenelse{\boolean{lncs}}{
  \let\doendproof\endproof
  \renewcommand\endproof{~\hfill$\qed$\doendproof}
}{
  \usepackage{amsmath,amssymb} %do not load for lncs cls 
  \usepackage{amsthm}
}

\usepackage[utf8]{inputenc}
\usepackage{graphicx}
\usepackage{subfig}
\usepackage{url}
\usepackage[colorlinks,pagebackref]{hyperref}

\usepackage{cite}
\usepackage{mathtools}
\usepackage[ruled,vlined,linesnumbered]{algorithm2e}
\usepackage{paralist}
\usepackage{todonotes}
\usepackage{xifthen}
\usepackage{thmtools, thm-restate}
\usepackage{xcolor}

\usepackage[capitalize]{cleveref}
\crefname{theorem}{Theorem}{theorems}
\crefname{definition}{Definition}{definitions}
\crefname{proposition}{Proposition}{propositions}
\crefname{lemma}{Lemma}{lemmas}
\crefname{corollary}{Corollary}{corollaries}
\crefname{claim}{Claim}{claims}
\crefname{observation}{Observation}{observations}
\crefname{fact}{Fact}{facts}
\crefname{dfn}{Definition}{definitions}
\crefname{obs}{Observation}{observations}
\crefname{pb}{Problem}{problems}

\newcounter{maincounter} 
\ifthenelse{\boolean{lncs}}{
}{
  %do not load for lncs cls 

\newtheorem{lemma}[maincounter]{Lemma} 
\newtheorem{corollary}[maincounter]{Corollary}

}
\newtheorem{observation}[maincounter]{Observation}

%red footnotes
%\renewcommand\thefootnote{\textcolor{red}{\arabic{footnote}}}

\ifthenelse{\boolean{linenumbers}}{ 
% lineno
% added line numbers 
% + some fixes (numbers missing for paragraphs containing math environments)
\usepackage[left, pagewise]{lineno}
\linenumbers
\newcommand*\patchAmsMathEnvironmentForLineno[1]{%
  \expandafter\let\csname old#1\expandafter\endcsname\csname #1\endcsname
  \expandafter\let\csname oldend#1\expandafter\endcsname\csname end#1\endcsname
  \renewenvironment{#1}%
     {\linenomath\csname old#1\endcsname}%
     {\csname oldend#1\endcsname\endlinenomath}}% 
\newcommand*\patchBothAmsMathEnvironmentsForLineno[1]{%
  \patchAmsMathEnvironmentForLineno{#1}%
  \patchAmsMathEnvironmentForLineno{#1*}}%
\AtBeginDocument{%
\patchBothAmsMathEnvironmentsForLineno{equation}%
\patchBothAmsMathEnvironmentsForLineno{align}%
\patchBothAmsMathEnvironmentsForLineno{flalign}%
\patchBothAmsMathEnvironmentsForLineno{alignat}%
\patchBothAmsMathEnvironmentsForLineno{gather}%
\patchBothAmsMathEnvironmentsForLineno{multline}%
}
% lineno end
}{%else
%no line numbers
}

%aligned comments in algorithm2e
\makeatletter
\newdimen\commentwd
\let\oldtcp\tcp
\def\alignedtcp*[#1]#2{% only support one style for simplicity
\setbox0\hbox{#2}%
\ifdim\wd\z@>\commentwd\global\commentwd\wd\z@\fi
\oldtcp*[r]{\leavevmode\hbox to \commentwd{\box0\hfill}}}

\let\oldalgorithm\algorithm
\def\algorithm{\oldalgorithm
\global\commentwd\z@
\expandafter\ifx\csname commentwd@\romannumeral\csname c@\algocf@float\endcsname\endcsname\relax\else
\global\commentwd\csname commentwd@\romannumeral\csname c@\algocf@float\endcsname\endcsname
\fi
}
\let\oldendalgorithm\endalgorithm
\def\endalgorithm{\oldendalgorithm
\immediate\write\@auxout{\gdef\expandafter\string\csname commentwd@\romannumeral\csname c@\algocf@float\endcsname\endcsname{%
\the\commentwd}}}

%simple (and rough) lncs page numer simulator
%\setlength\textwidth{122mm}
%\setlength\textheight{170mm}

 %\textnormal{dist} 

\renewcommand{\O}{\ensuremath{\mathcal{O}}}

\newcommand{\E}{\ensuremath{\mathcal{E}}}

\newcounter{algo} 
\setcounter{algo}{1}
\newcounter{subalgo}[algo] 
\setcounter{subalgo}{0}

%\setlength{\parskip}{1.0ex}

%\renewcommand\section{\@startsection{section}{1}{\z@}{-10\p@ \@plus -4\p@ \@minus -4\p@}{5\p@ \@plus 4\p@ \@minus 4\p@}{\normalfont\large\bfseries\boldmath\rightskip=\z@ \@plus 8em\pretolerance=10000 }}

%%% Local Variables: 
%%% mode: latex
%%% TeX-master: "fcds-wrapper"
%%% End: 

\newcommand{\G}{\ensuremath{\mathcal{G}}}
\newcommand{\V}{\ensuremath{\mathcal{V}}}
\renewcommand{\H}{\ensuremath{\mathcal{H}}}%already defined
\newcommand{\C}{\ensuremath{\mathcal{C}}}%already defined
\newcommand{\N}{\ensuremath{\mathcal{N}}}%already defined

\title{On the Distributed Computation of Fractional \\ Connected
  Dominating Set Packings}

\newcommand{\allauthors}{Fabian Fuchs,
 Matthias Wolf}
\newcommand{\allinstitute}{{Karlsruhe Institute for Technology } \\
  {\normalsize   Karlsruhe, Germany }}
\newcommand{\allemails}{{\normalsize fabian.fuchs@kit.edu, matthias.wolf3@student.kit.edu } }

\ifthenelse{\boolean{lncs}}{
\author{
 \allauthors
}
\institute{
  \allinstitute \\
    \email{\allemails}   
}
}{
\author{
 \allauthors \\
 \allinstitute \\
 \allemails 
}
}

\date{\today}

\begin{document}

\maketitle
\pagestyle{plain}

\begin{abstract}
One of the most fundamental problems in wireless networks is to
achieve high throughput. Fractional Connected Dominating Set (FCDS)
Packings can achieve a throughput of~$\Theta(k/\log n)$ messages for
networks with node connectivity~$k$, which is optimal regarding
routing-based message transmission. FCDS were proposed by
Censor-Hillel \emph{et~al.} [SODA'14,PODC'14] and are a natural
generalization to Connected Dominating Sets (CDS), allowing each node
to participate with a fraction of its weight in multiple FCDS. Thus,
$\Omega(k)$ co-existing transmission backbones are established, taking
full advantage of the networks connectivity.  We propose a modified
distributed algorithm that improves upon previous algorithms for
$k\Delta \in o(\min\{\frac{n \log n}{k} ,D,\sqrt{n \log n} \log^*
n\}\log n)$, where $\Delta$ is the maximum node degree, $D$ the
diameter and $n$ the number of nodes in the network. We achieve this
by explicitly computing connections between tentative dominating sets.
\end{abstract}

%\clearpage
%\setcounter{page}{1}

\section{Introduction}
\label{sec:introduction}

%first paragraph
%efficient communication is major goal in wireless sensor networks,
%especially in future large networks
Wireless ad hoc and sensor networks are used to monitor the
environment, industry processes and even large parts of
infrastructure. In order to cope with the growing networks size and
its demand for efficient communication throughout the network,
algorithms and protocols that utilize the capacity available in the
network optimally are required. One of the standard methods to manage
high throughput in the network is to compute a backbone
structure. Recently, Censor-Hillel \emph{et al.}
\cite{chgk-dcd-14,chgk-anpvc-14} proposed an algorithm that allows to
build a network topology based on Fractional Connected Dominating Set
(FCDS, see \cref{sec:preliminaries} for a definition), which can be
seen as a generalized Connected Dominating Set (CDS).  Such fractional
connected dominating sets can be used to achieve a broadcast
throughput of $\Theta(k/\log n)$ messages in networks with $n$ nodes
and vertex-connectivity $k$, which is optimal regarding routing-based
approaches \cite{chgk-anpvc-14}. This improves on the standard method
of using one backbone by, intuitively, replacing it with as many
fractional backbones as the network can fit due to its
connectivity. To give further intuition, we show an example network
that admits multiple FCDS (a so-called FCDS packing) in
\cref{fig:fcds-illustration}.
In this work we propose an improved version of the distributed
algorithm originally proposed in \cite{chgk-dcd-14}. Our algorithm is
especially beneficial for future large-scale wireless networks, as
such networks are expected to consist of a huge number of small
wireless nodes deployed on a relatively large area. We present a
distributed algorithm that computes a FCDS packing by explicitly
computing the connector paths between not-yet connected components of
the respective FCDS. Our algorithm runs in the message-passing model
V-$\mathcal{CONGEST}$ and has a round complexity of
$\O(\log^2 n (\min\{\frac{n \log n}{k},D + \sqrt{n \log n} \log^* n\}
+ k\Delta))$.
This improves the runtime of previously
$\O(\log^3 n \cdot \min\{\frac{n \log n}{k},D + \sqrt{n \log n} \log^*
n\})$
for large and relatively sparse networks with moderate connectivity
$k$. More precisely our variant leads to an improvement regarding the
runtime if
$k\Delta \in o(\min\{\frac{n \log n}{k} ,D,\sqrt{n \log n} \log^*
n\}\log n)$,
which is for example true for logarithmic $k$ and $\Delta$ combined
with a diameter such that $k\Delta \in \O(D)$ (e.g., in the order of
$\sqrt{n}$). An example of a network with network parameters
beneficial for our algorithm is described in
\cref{sec:example-network}. Intuitively the achieved complexity is
beneficial for networks with large diameter and moderate density, and
generally for large but sparse networks.

%how does it work?
Our algorithm is based on the virtual graph structure Censor-Hillel
\emph{et al.} \cite{chgk-anpvc-14,chgk-dcd-14} use to compute a FCDS
packing. In their distributed implementation they do not explicitly
compute the connector paths between the components but rely on the
fact that a sufficient number of paths exist, which requires
additional coordination within tentatively established dominating sets
(so-called components). In \cite{chgk-dcd-14}, the approach of
explicitly computing the paths is rendered as probably too
expensive. However, our algorithm improves the runtime while
explicitly computing possible connector paths between tentative
components. Let us now introduce the layered approach used in both
their algorithm and ours. For each node in the network we
introduce a set of $\O(\log n)$ virtual nodes, each virtual node shall
be assigned to one FCDS, resulting in a FCDS packing in which each
FCDS has at least weight~$1/\O(\log n)$. The virtual nodes are
assembled to a virtual graph~$\G$, and partitioned in layers such that
there are 1 to 3 copies of a node in each layer. A nodes copy is
connected to all copies of the nodes neighbors in $\G$
(cf. \cref{sec:preliminaries} for details). Now, layer for layer, the
virtual nodes are assigned different classes, each of which shall
result in a FCDS once the algorithm finishes. Using the first half of
the layers, dominating sets are formed, which are then connected by
selecting so-called connector paths between components using nodes
from the remaining layers. % Both our algorithm and the algorithms of
% \cite{chgk-dcd-14,chgk-anpvc-14} follow this general scheme.

The improvement in our algorithm is achieved by improving the process
of how connector paths are matched to existing components. In
\cite{chgk-dcd-14}, paths are matched to components by building a
so-called bridging graph. In the bridging graph, whole components
(which may span large parts of the network) are simulated by a virtual
node and participate in a matching. Thus, the matching algorithm must
communicate through the component, which may require time in the order
of $\Omega(D)$ in each step of the matching.  In our algorithm we
construct a matching graph that can directly be executed by
transmitting one message to each neighbor in each round. 
%
% As we determine the connector paths explicitly, this routine must be
% executed for each FCDS that is to be computed separately, resulting in
% the~$k$~factor in our runtime.
%
Overall finding the connector paths
requires $\O(k\Delta \log n)$ rounds for each one of the $\O(\log n)$
layers. Once the connector paths are found, they can be used to
connect the components, in order to create multiple CDS is the virtual
graph. By translating each CDS to a FCDS in the network, $\Theta(k)$
communication backbones are established in the network, each with
weight $1/\log n$.

\noindent
\textbf{Related Work:}
\label{sec:related-work}
Research on FCDS was started by Censor-Hillel \emph{et al.}
in~\cite{chgk-anpvc-14}. They propose a centralized method to
compute CDS partitions of size $\Omega(k/\log^5 n)$, as well as FCDS
packings of size $\Omega(k/\log n)$, where $k$ is the
vertex-connectivity of the network.  FCDS packings are the natural
generalization of CDS partitions, which allows each node to
participate in multiple CDS with a fractional weight between 0 and 1
per CDS such that the sum of the weights is at most 1. Their approach
is based on a layered virtual graph, consisting of~$\Theta(\log n)$
virtual copies of each node. Each virtual node selects one
of~$\Theta(k)$ classes, which form the FCDSs later. Using the
first~$\log n$ layers they achieve domination by assigning random
classes to the nodes, the remaining layer are used to connect the
existing dominating sets to one connected dominating set per
class. Additional to computing FCDS packings, they show that the
broadcast throughput using a FCDS packing of size $\Omega(k/\log n)$
is $\Omega(k/\log n)$ messages per round, which is optimal if
restricted to routing-based approaches. In contrast to, for example,
network coding \cite{fs-ncf-07}, such approaches consider messages as
atomic tokens and use simple store-and-forward methods to route the
message.  The throughput achievable though routing-based approaches is
a logarithmic factor less than the $\Theta(k)$ messages that can be
achieved using network coding \cite{chgk-anpvc-14}, however, without
introducing further challenges (see \cite{fs-ncf-07} for example).
Ene, Korula and Vakilian \cite{ekv-iaacd-13} consider FCDS packings
under the constraint that each node has a capacity. Also using
centralized algorithms they compute FCDS packings of size $\Omega(k)$
for planar and minor-closed families of graphs, and $\Omega(k/\log n)$
for the general case.  The first distributed implementation is again
due to Censor-Hillel, Ghaffari and Kuhn \cite{chgk-dcd-14}. In this
work they consider both vertex- and edge-connectivity. For
vertex-connectivity they compute a FCDS packing (or a fractionally
disjoint weighted dominating tree, which is a similar concept) of size
$\Omega(k/\log n)$, building on the initial approach in
\cite{chgk-anpvc-14} as we do in this work.

\ifthenelse{\boolean{short}}{}{
Other related topics are connected dominating sets
\cite{gk-aacds-98,db-rahnmcds-97}, as well as dominating set
partitions (see e.g. \cite{fhks-adn-02}).  Regarding the application
of FCDS to increase the throughput, related works are multi-message
broadcasting in wireless networks \cite{bii-mcmrn-93} and network
coding \cite{fs-ncf-07}.
}

%%% LOCAL Variables: 
%%% mode: latex
%%% TeX-master: "fcds-wrapper"
%%% End: 

\section{Preliminaries}
\label{sec:preliminaries}

Our algorithms operate on the communication graph $G=(V,E)$ of a
wireless sensor network, where $V$ is the set of nodes or actors in
the network, and $E$ the set of edges. An edge $e=(u,v) \in E$ is in
the communication graph if $v,u \in V$ can communicate in the
network. We assume the communication to be bidirectional, and hence
the communication graph to be undirected.
%\paragraph*{Model:} 
We assume a standard message-passing model, known as
(V-)$\mathcal{CONGEST}$. Communication is based on synchronous rounds, in which
each node can receive the messages of its neighbors as well as
transmit one identical message to all neighbors itself. In contrast to
the E-$\mathcal{CONGEST}$ model, in which the nodes may transmit different
messages to their neighbors, the congestion is on the nodes instead of
the edges of the corresponding communication graph. Note that this
fits the broadcast nature of wireless networks.

%\paragraph*{FCDS:} 
A \emph{dominating set} $S \subseteq V$ is a set of nodes in the
network such for each node $v \in V$ it holds that either $v \in S$ or
a neighbor of $v$ is in $S$. If such a set is connected we denote it
by connected dominating set (CDS). \emph{Fractional CDS} (FCDS)
packings are the natural generalization of CDS. In an FCDS packing,
each node can participate in multiple FCDSs with a weight
$x_i \in [0,1]$ for each FCDS, such that $\sum_i x_i \leq 1$ for each
node. 
\begin{figure}[bth]
  \centering
  \includegraphics[width=0.7\linewidth]{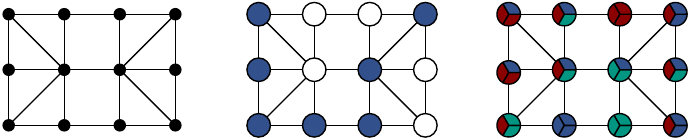}
  \caption{\small From left to right: A 3-vertex connected graph $G$, a (not
    optimal) connected dominating set in $G$ marked by blue nodes, and
    a FCDS packing in G establishing 3 fractional connected dominating
    sets.}
  \label{fig:fcds-illustration}
\end{figure}
%\paragraph*{Virtual Graph}
The virtual graph $\G = (\V, \E)$ used in the construction of the FCDS
was first introduced in \cite{chgk-anpvc-14}. For each node $v \in V$
we introduce $\O(\log n)$ copies in the set of virtual nodes $\V$.
Each copie of $v$ is connected to all other copies of $v$ in $\V$ and
to each copy of a neighbor $w \in V$ of $v$ in $\G$. We denote the
neighbors of $v$ in $G$ by $N_v$, and in $\G$ by $\N_v$.  In contrast
to the original description, which was also used for the first
distributed implementation in \cite{chgk-dcd-14}, we use $3L$ copies
of each node instead of $4L$ for $L\in\O(\log n)$, however, this is a
minor technical detail.  In total the virtual graph has $3L$ copies of
the original graph, plus some additional edges. We subdivide the
virtual graph in layers and call the first $L$ copies of $V$ in $\V$
the \emph{lower layers}. Each so-called \emph{upper layer} consists of
two copies of $V$. We call the nodes of the first copy \emph{type-1}
nodes, and the nodes of the second copy \emph{type-2} nodes. For each
layer $l$ we denote the nodes of layers $1$ to $l$ by $\V_l$ and the
subgraph induced by these nodes by $\G_l$.

As we compute multiple FCDS simultaneously in the virtual graph, we
distinguish each FCDS by a class $i \leq t \in \Theta(k)$.  We denote the
subset of nodes of class $i$ up to layer $l$ by $\V^i_l$, and the
induced subgraph by $\G^i_l$.  We use $\Psi(v_l) = v$ to project from
nodes (or a set of nodes) of the virtual graph to the corresponding
real node(s). Throughout the rest of the paper we shall use the term
node to refer to virtual nodes in $\G$, and real node to refer to a
node of network.%\ff{image}
%
%\paragraph*{Connector Paths}
%
During the execution of the algorithm, we aim at connecting
not-yet-connected components of the dominating set of a class $i$ to
other components of the same class. Given a connected component $\C$
we use so-called connector paths to identify vertices that could
connect $\C$ to another component $\C'$ with
$\Psi(\C) \cap \Psi(\C') = \emptyset$, both of the same class. In
compliance with \cite{chgk-anpvc-14} we call a path $P$ a
\emph{connector path} for component $\C$ if it satisfies the following
conditions:
\begin{enumerate}[a)]
\item \label{cond:cp-1} $P$ has one endpoint in $\C$ and the other
  endpoint in $\C'$
\item \label{cond:cp-2} $P$ has at most two internal vertices and
\item \label{cond:cp-3} P cannot be shortened, i.e., for
  $P= (s, v, w, u)$ with $s \in \C$ and $u\in\C'$, $u$ does not have a
  neighbor in $\C'$ and $v$ does not have a neighbor in $\C$.
\item \label{cond:cp-4} if $P= (s, v, w, u)$ with $s \in \C$ and
  $u\in\C'$ has two internal vertices, $v$ is of type-2 and $w$ of
  type-1.
\item \label{cond:cp-5} if $P = (s, v, u)$ with $s \in \C$ and $u\in\C'$
  has one internal vertices, $v$ is of type-1.
\end{enumerate}
Connector paths can have length two or three as the components of
each class are already dominating. We call connector paths of length
two \emph{short} and those of length three \emph{long}. 
% In comparison to \cite{chgk-anpvc-14} we introduced minor restrictions
% on the type of the node in order to identify connector paths
% unambiguously by the internal vertex/vertices. This does not influence
% their results.  
For a path $(v_1, v_2,\dots, v_{i-1},v_i)$, we call the set of nodes
$\{v_2,\dots,v_{i-1}\}$ the \emph{internal} nodes. We call a set of paths
$\{P_1,\dots,P_j\}$ \emph{internally vertex-disjoint} if the internal
nodes of $\{P_1,\dots,P_j\}$ are mutually disjoint.

The following lemma states that we always find at least $k$ connector
paths for each component in a $k$-connected graph.
\begin{lemma}[Lemma 4.3 in \cite{chgk-dcd-14}]
  \label{lem:k-connector-paths}
  For each component $\C$ of an arbitrary class $i$ at an arbitrary
  level $l$ it holds that $\C$ has at least $k$ internally vertex
  disjoint connector paths
\end{lemma}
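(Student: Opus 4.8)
The plan is to descend to the real communication graph $G$, exploit its $k$-connectivity through Menger's theorem to extract $k$ internally vertex-disjoint paths leaving $\C$, shorten each to a connector of length at most three using that the whole class $i$ is dominating, and finally lift everything back to $\G$ by realizing the internal real vertices as the prescribed type-2 / type-1 copies. Write $S=\Psi(\C)$ and let $D=\Psi(\V^i_l)$ be the dominating set formed by class $i$, so that $T:=D\setminus S$ is the union of the real projections of all other components of class $i$; if $T=\emptyset$ the component dominates on its own and there is nothing to connect, so assume $T\neq\emptyset$. The first observation I would record is that no edge of $G$ joins $S$ to $T$: if a real neighbour $v$ of $S$ with $v\notin S$ carried a class-$i$ copy, that copy would be adjacent in $\G$ to a copy in $\C$ and hence lie in $\C$, forcing $v\in S$. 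In particular every real neighbour of $S$ lies outside $D$, which is exactly what makes the irreducibility condition~(c) attainable and guarantees that $S$ and $T$ are genuinely separated.

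Second, I would obtain the $k$ disjoint paths by a vertex-capacitated flow. Attach a super-source to every vertex of $S$ and a super-sink to every vertex of $T$, give each vertex of $V\setminus(S\cup T)$ capacity one and every vertex of $S\cup T$ infinite capacity. A minimum cut must then consist of a set $X\subseteq V\setminus(S\cup T)$ separating $S$ from $T$ in $G$; since $X$ avoids $S\cup T$, its removal leaves both $S$ and $T$ nonempty in different components, so $X$ is a vertex cut of the $k$-connected graph and $|X|\ge k$. Hence the flow value is at least $k$ and yields $k$ paths from $S$ to $T$ that are disjoint on their internal vertices but may share endpoints --- precisely the freedom that the notion of internally vertex-disjoint connector paths allows. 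This needs no lower bound on $|S|$ or $|T|$, because the infinite capacities on the terminals push every finite cut onto a genuine internal separator.

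Third comes the shortening, which I expect to be the main obstacle, and I would handle it by choosing the flow to have minimum total length and then rerouting. Along a path let $s\in S$ be its first vertex and $v$ its first vertex outside $S$; then $v\notin D$ and $v$ is dominated by some $d\in D$. If $d\in T$, the path may be replaced by $(s,v,d)$, which keeps only the internal vertex $v$ and ends at the infinite-capacity terminal $d$, so minimality forces the path to equal this short connector. If instead $d\in S$, let $w$ be the third vertex; minimality forbids $w$ from being adjacent to $S$ (otherwise the path could start at that neighbour and drop $v$), so if $w\in D$ then $w\in T$ and $(s,v,w)$ is already a connector, while otherwise $w$ is dominated by some $d_w$ that, not being in $S$, must lie in $T$, giving the long connector $(s,v,w,d_w)$ with internal vertices $\{v,w\}$. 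Since rerouting only discards internal vertices, the internal vertices remain a subset of the original disjoint ones, so the $k$ connectors stay internally vertex-disjoint and condition~(c) holds by minimality. Finally I lift to $\G$ by sending a short connector's internal real vertex to its type-1 copy in layer $l$ and a long connector's two internal real vertices to the type-2 and then the type-1 copy, as required by conditions~(d) and~(e); distinct internal real vertices produce distinct virtual copies, the endpoint in $T$ carries a class-$i$ copy in some component $\C'\neq\C$ with $\Psi(\C)\cap\Psi(\C')=\emptyset$, and adjacencies in $G$ become adjacencies in $\G$. The delicate point throughout is this simultaneous length reduction: a single path is easy to shorten since $\dist(S,T)\le 3$, but doing it for all $k$ paths at once without colliding on internal vertices is exactly what the infinite-capacity terminals together with the minimum-length choice are designed to secure.
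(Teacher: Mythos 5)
The paper does not prove this lemma itself; it imports it verbatim from \cite{chgk-dcd-14}, so there is no in-paper proof to compare against. Your reconstruction follows the same route as the cited source --- project to $G$, apply Menger's theorem to $\Psi(\C)$ versus the projections of the other class-$i$ components, and use domination by class $i$ to shorten each path to length at most three --- and it is sound; the minimum-total-length exchange argument is a clean way to make the simultaneous shortening of all $k$ paths rigorous, and the preliminary observation that no edge of $G$ joins $\Psi(\C)$ to the other components (else the components would have merged in $\G$) is exactly the right justification for why condition~\ref{cond:cp-3}) is attainable. Two minor points of looseness, neither fatal: the case split ``$v$ is dominated by some $d\in T$'' versus ``$d\in S$'' should really be ``$v$ has a neighbour in $T$'' versus ``all of $v$'s dominators lie in $S$'', since $v$ may be dominated by several nodes, but your minimality argument already forces $v$ to have no neighbour in $T$ on any surviving long path, which is what condition~\ref{cond:cp-3}) needs; and the statement only makes sense when class $i$ still has at least two components, which you correctly flag rather than sweep under the rug.
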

Note that the algorithms proof of correctness requires the connector
paths of one component to be internally vertex-disjoint. We assume our
connector paths to have this property in the following section. It is
easy to verify that enough short connector paths are available. For
long connector paths, we shall explain how a sufficient number of
internally vertex-disjoint long connector paths can be found in
\cref{sec:find-long-conn}. In the virtual graph $\G$, our algorithm
computes a CDS partition, as each node may select exactly one class
$i$. Let us now briefly consider how this translates to a FCDS packing
in $G$.
%
% \paragraph*{From CDS in $\G$ to FCDS in $G$}
%
Let $v \in G$ be a node of the network and $v_1,\dots,v_{3L}$ the
corresponding virtual nodes in $\G$. Given $\Theta(k)$ CDSs in $\G$, we can construct a FCDS in $G$ by weighting
the class of each virtual node $v_i$ by $1/3L$ at the real node
$v$. As there are $3L$ virtual copies of $v$, the weight constraint is
satisfied, and the CDS partition translates to a FCDS packing.

% \ff{They showed that we can still achieve a nice broadcast throughput
%   or so.}

%%% Local Variables: 
%%% mode: latex
%%% TeX-master: "fcds-wrapper"
%%% End: 

\section{Distributed FCDS Computation}
\label{sec:distr-fcds-comp}

Our algorithm consists of two main components to construct
$t = \Theta(k)$ CDSs in $\G$. Recall that we
assign each virtual node to one of $t$ classes, which shall form the
CDSs after the execution of the algorithm.  The first $\O(\log n)$
layers of virtual nodes establishes that each class dominates the
whole graph with high probability (cf. \cref{lem:dominance}).  This is
surprisingly simple and can be achieved by having each virtual node
select one of the classes at random.  For the second $\O(\log n)$ layers
we aim at connecting a constant fraction of the connected components
in each layer (with constant probability). This leads to connectivity
of each class with high probability after the last layer, yielding the
desired CDSs.
The existing distributed algorithm to compute FCDS packings uses the
same approach for the lower layers, and (essentially) matches existing
components in each of the upper layers without computing the connector
paths.  Our approach on the other hand explicitly computes the (long)
connector paths by constructing a helper graph in which a matching
algorithm finds $\Omega(k)$ such paths. Thus we do not require
communication through existing components, which is beneficial for
many networks, especially if they are large with respect to the diameter.
Our algorithm consists of the following steps. Note that the overall
design of the algorithm is similar to that of \cite{chgk-dcd-14},
however, we use a different method to connect the components of each
class, which is one of the key parts of the algorithm.

\renewcommand{\paragraph}[1]{\hspace*{-4mm}
\textbf{#1)}}
\ifislncs 
\renewcommand{\subparagraph}[1]{\hspace*{0mm}\textbf{#1)}}
\else 
\renewcommand{\subparagraph}[1]{\hspace*{0mm}\textbf{#1)}}
\fi
%reference algorithm steps
\newcounter{algostepscounter}
\newcommand*{\algolabel}[1]{B.\refstepcounter{algostepscounter}\thealgostepscounter \label{#1}}
\newcommand*{\algoref}[1]{\hyperref[#1]{B.\ref{#1}}}

\paragraph{A} 
\phantomsection\label{algo:step-a}
%does only work with \hyperref[algo:step-a]{A}, otherwise section
%number will be produced
Each virtual node in the lower layers randomly selects one of the $t$
classes. This leads to domination of each class whp, cf.
\cref{lem:dominance}.

\paragraph{B} 
\phantomsection \label{algo:step-b}
%does only work with \hyperref[algo:step-b]{B}, otherwise section
%number will be produced
For each upper layer $l$ from $L$ to $2L$ we try to connect existing
connected components of each class in the nodes of layers $1$ to $l-1$
using nodes of layer $l$.  We call the nodes of the previous layers
$1$ to $l-1$ \emph{old nodes} and the nodes of layer $l$ \emph{new
  nodes}. For each layer we execute steps \algoref{algo:identify-con-components}
to \algoref{algo:type-2-random}.

\subparagraph{\algolabel{algo:identify-con-components}} Identify
connected components of old nodes. We use the protocol described in
\cite{chgk-dcd-14}.%, which is based on an algorithm to identify
%connected components by Thurimella
%\cite{t-sldas-97}.% and Kutten and Peleg \cite{kp-fscsds-98}.
To be self-contained, we describe the protocol in
\cref{sec:ident-conn-comp}.

\subparagraph{\algolabel{algo:type-1-random}} Let nodes of type-1
select a random class

\subparagraph{\algolabel{algo:find-long-conpaths}} For each class $i$:
If the nodes component is not yet connected by short connector paths,
find $\Omega(k)$ internally vertex-disjoint long connector paths.  We
construct a helper graph $\H_i$ and run a simple matching algorithm to
find the long connector paths. For details on this step we refer to
Section~\ref{sec:find-long-conn}.

\subparagraph{\algolabel{algo:type-2-random}} If the type-2 node
is on long connector paths, the node discards the~paths for which the
type-1 node selected a wrong class, and selects the class of one of
the remaining paths at random%\footnote{Note that if a path remains, 
%it is likely that it is the only remaining path.}
. If no path remains a
random class is selected.

After executing this algorithm each virtual node in $\G$ is assigned
itself to one of the $t$ classes.  Each class dominates the whole
graph (Step~\hyperref[algo:step-a]{A}) and is connected (Step~\hyperref[algo:step-b]{B}).  Thus, the nodes computed
$t = \O(k)$ CDSs in the virtual graph $\G$. The CDSs can be converted
to one FCDS of size $\Omega(k/\log n)$ by assigning each CDS a weight
of $1/3L$ (cf. \cref{sec:preliminaries}).
Note that the matching algorithm in Step~\algoref{algo:type-1-random}
matches type-1 with type-2 nodes, thus it does not require
communication and coordination within large components. Let us now
briefly reference the result that achieves dominance in the lower
layers. 
\begin{lemma}[\small Lemma~4.1 \cite{chgk-dcd-14}]
  \label{lem:dominance}
  For each class $i$, $\V^i_l$ is a dominating set in $\G$ w.h.p.
\end{lemma}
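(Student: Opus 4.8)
The plan is to reduce the global domination statement to a single event per real node and then finish with a union bound. First I would use the clique structure of~$\G$: all $3L$ copies of a real node~$v$ are mutually adjacent, and every copy of~$v$ is adjacent to every copy of each real neighbour $w \in N_v$. Hence, if even one lower-layer copy of a node in the closed real-neighbourhood $N_v \cup \{v\}$ selects class~$i$ in Step~\hyperref[algo:step-a]{A}, then \emph{every} copy of~$v$ is dominated by class~$i$. Since each node of~$\G$ is a copy of some real node, it therefore suffices to prove that, with high probability, for every real node~$v$ at least one lower-layer copy among the copies of~$v$ and of its neighbours lies in class~$i$.

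Next I would bound the failure probability for a fixed~$v$. In the lower layers each of the $L \in \Theta(\log n)$ copies of every real node independently picks class~$i$ with probability~$1/t$, and these choices are mutually independent. The closed neighbourhood $N_v \cup \{v\}$ consists of $|N_v| + 1$ real nodes, so the relevant pool has $L(|N_v|+1)$ independent trials, and the probability that none of them selects class~$i$ is $(1-1/t)^{L(|N_v|+1)} \le \exp(-L(|N_v|+1)/t)$.

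The only place that needs genuine care is making this bound beat an $n$-fold union bound, and this is exactly where $k$-connectivity enters: a $k$-vertex-connected graph has minimum degree at least~$k$, so $|N_v| + 1 \ge k+1$, while $t = \Theta(k)$ gives $(|N_v|+1)/t = \Omega(1)$. The per-node failure probability is thus $\exp(-\Omega(L)) = \exp(-\Omega(\log n)) = n^{-\Omega(1)}$, and choosing the hidden constant in $L \in \O(\log n)$ large enough makes it at most $n^{-c-1}$ for any prescribed constant~$c$. A union bound over the $n$ real nodes then shows every copy is dominated except with probability $n^{-c}$, giving the claimed w.h.p.\ bound; since domination is monotone in the set of selected nodes, it persists for every level $l \ge L$. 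The main obstacle to anticipate is that the naive argument using only the~$L$ copies of~$v$ itself fails---with $\Theta(k)$ classes competing for only $\Theta(\log n)$ copies, $(1-1/t)^{L} \approx e^{-L/t}$ is not $o(1/n)$ once $k$ exceeds a constant---so borrowing the $\ge k$ neighbouring copies supplied by connectivity, which raises the number of independent trials from $\Theta(\log n)$ to $\Theta(k \log n)$, is the crux; everything else is a routine union-bound calculation.
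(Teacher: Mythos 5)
Your proof is correct and follows essentially the same route as the paper's (sketched) argument: each lower-layer copy picks class~$i$ with probability $1/t = 1/\Theta(k)$, the $\geq k$ neighbours guaranteed by $k$-connectivity supply $\Theta(k\log n)$ independent trials over the $L$ lower layers, and a union bound over the $n$ real nodes finishes the job. Your pooled bound $(1-1/t)^{L(k+1)}$ is just the paper's ``constant success probability per layer, amplified over $\log n$ layers'' written in one step, and you correctly identify the same crux the paper does, namely that $v$'s own copies alone would not suffice.
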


\ifthenelse{\boolean{lncs}}{}{
The proof idea is based on the fact that, for class $i$ and a
node $v \in V$, the probability that $v$'s virtual copy on layer $l$
selects $i$ is at least $1/t = 1/\O(k)$. As each node has at least $k$
neighbors on $l$, this yields constant probability per layer, and w.h.p.
over all $\log n$ layers.
}

%%% Local Variables: 
%%% mode: latex
%%% TeX-master: "fcds-wrapper"
%%% End: 

\section{Finding Connector Paths}
\label{sec:find-long-conn}

In this section we show how our algorithm computes internally
vertex-disjoint connector paths for each component in order to connect
a constant fraction of the components in each upper layer. We begin
this section by giving a high-level proof showing that we can indeed
connect a constant fraction of the components with each new layer. In
the next sections we introduce the necessary tools and prove the
remaining results. In \cref{sec:virt-graph-hi} we introduce the graph
$\H_i$, which helps to reduce the problem of finding long paths for
each component to a matching problem. The matching problem is
discussed in more detail in \cref{sec:matching}.

As introduced in \cref{sec:preliminaries}, connector paths can have
one or two internal nodes, we call them short and long connector
paths, respectively.  To prove correctness for the algorithm, it must
hold for each one of the upper layers that at least a constant
fraction of the components (formed by old nodes) of each class are
connected to another component of the same class using nodes from the
current layer with at least constant probability. 
%
% Thus for each layer we prove that the number of connected components
% decreases by a constant factor with constant probability. Using a tail
% bound we can show that this leads to connectivity with high
% probability after completing all $\O(\log n)$ layers. 
%
We shall now state the overall result of this section, which was first
obtained and proven in \cite{chgk-dcd-14}.  Note that there is a minor
flaw in the original proof regarding the number of missing connections
$M_l$ in layer $l$, see \cite[p. 21]{wolf-ba-14} for details and a
corrected proof. Due to space constraints we sketch the proof of the
following lemma in \cref{sec:ml-lemma}

%\footnote requires \protect in lemma/theorem environment
\begin{lemma}[Lemma~4.4 in \cite{chgk-dcd-14}]
  \label{thm:connect-components-in-layer}
  Let $l \in [L,3L]$. Then $M_{l+1} \leq (1-\delta) M_l$ with
  probability at least $\rho$.
\end{lemma}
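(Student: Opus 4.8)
The plan is to establish the one-step contraction in two stages: first a first-moment bound showing that, in expectation, a constant fraction of the still-missing connections is resolved when layer $l+1$ is added, and then a reverse-Markov argument turning this into the claimed constant-probability bound. Throughout I take $M_l = \sum_{i=1}^{t}\bigl(c_i^{(l)}-1\bigr)$, where $c_i^{(l)}$ is the number of connected components of class $i$ in $\G_l$; this is exactly the number of merges still needed for global connectivity, and adding a layer can only merge components, so $M_{l+1}\le M_l$ deterministically.

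Next I would fix a single component $\C$ of a class $i$ with $c_i^{(l)}\ge 2$ and lower-bound the probability that $\C$ gains at least one realised connector edge in layer $l+1$ by a constant. By \cref{lem:k-connector-paths}, $\C$ has at least $k$ internally vertex-disjoint connector paths. A long path $(s,v,w,u)$ is realised when its type-1 node $w$ draws class $i$ in Step~\algoref{algo:type-1-random} --- which happens with probability $1/t=\Theta(1/k)$ --- and its type-2 node $v$ then selects this path in Step~\algoref{algo:type-2-random}. Since the $k$ paths of $\C$ have disjoint internal nodes, the type-1 draws along distinct paths are independent, so the probability that not a single type-1 node of $\C$'s paths draws $i$ is at most $(1-1/t)^{k}\le e^{-k/t}=e^{-\Omega(1)}$, a constant strictly below $1$. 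Accounting for the subsequent type-2 selection --- where a second-moment (Paley--Zygmund-type) estimate absorbs the limited dependence introduced by type-2 nodes that lie on paths of several components --- yields a constant $q>0$ with $\P[\C\text{ is connected}]\ge q$.

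The delicate step, and the one the flaw in the original argument concerns, is converting these per-component successes into an honest decrease of $M_l$: several realised paths inside one class may form cycles, so the number of merges is \emph{not} the number of realised connector edges. To handle this cleanly I would introduce the auxiliary graph $F_i$ whose vertices are the components of class $i$ and whose edges are the realised connector paths, and let $X_\C\in\{0,1\}$ indicate that $\C$ is non-isolated in $F_i$. Each connected component of $F_i$ of size $s$ contributes a decrease of $s-1\ge s/2$ to the component count regardless of its cycles, so the total drop for class $i$ is at least $\tfrac12\sum_{\C}X_\C$. Summing over classes, taking expectations, and using $\P[X_\C=1]\ge q$ together with $M_l\le\sum_{i:c_i^{(l)}\ge2}c_i^{(l)}$ gives $\mathbb{E}[M_l-M_{l+1}]\ge \tfrac{q}{2}\,M_l =: \delta_0 M_l$.

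Finally I would apply the reverse Markov inequality to $Y:=M_l-M_{l+1}\in[0,M_l]$: since $\mathbb{E}[Y]\ge\delta_0 M_l$, for any $\delta<\delta_0$ we have $\P[Y\ge\delta M_l]\ge(\delta_0-\delta)/(1-\delta)$, and choosing $\delta=\delta_0/2$ gives $\P[M_{l+1}\le(1-\delta)M_l]\ge\rho$ for the constant $\rho=(\delta_0/2)/(1-\delta_0/2)$. The two obstacles I expect are genuine: the contention among components for shared type-1 and type-2 nodes, which prevents a naive independence argument and forces the second-moment estimate above, and the cycle over-counting in the merge step, which is precisely the point where the $\tfrac12$ factor (rather than treating every realised path as a merge) repairs the original proof.
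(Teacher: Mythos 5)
Your overall skeleton is the same as the paper's: fix a component $\C$, use \cref{lem:k-connector-paths} to get $k$ internally vertex-disjoint connector paths, argue that $\C$ merges with another component with constant probability, and then amortize over all components to contract $M_l$. Your final stage --- the auxiliary graph $F_i$ on components, the observation that a connected piece of size $s$ yields $s-1\ge s/2$ merges regardless of cycles, and the reverse-Markov step --- is exactly the part the paper's sketch asserts without proof (``this implies that a constant fraction of the existing components are connected''), and it is a correct and welcome way to supply it; it also plausibly addresses the flaw in the original argument that the paper flags.

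There is, however, a genuine gap in your per-component bound. The paper splits into two cases: at least $k/2$ \emph{short} paths (handled by \cref{lem:short-paths-constant-fraction}: the single internal node is type-1, so the $(1-1/t)^k$ computation over independent type-1 draws closes the argument) versus at least $k/2$ \emph{long} paths (handled by \cref{prop:long-paths-constant-fraction}). You collapse these, and for long paths your argument is mis-ordered: showing that \emph{some} type-1 node among $\C$'s paths draws class $i$ with constant probability, and then ``accounting for the subsequent type-2 selection,'' only yields success probability $\Theta(1/t)=\Theta(1/k)$, since the type-2 node on that one surviving path chooses among up to $t$ candidate paths (\cref{obs:type-2-node-leq-delta-components}). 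The correct structure is to lower-bound the \emph{joint} probability that both internal nodes of a single long path pick class $i$ by $\Omega(1/k)$, \emph{independently of the outcomes on $\C$'s other paths}, and only then exploit the $\Omega(k)$ trials. Establishing that per-path bound in the face of type-2 contention is the actual technical content of \cref{prop:long-paths-constant-paths-constant-fraction} is where the paper introduces the random-discard coupling giving each path probability in $[1/4t,1/t]$; your ``Paley--Zygmund absorbs the limited dependence'' names the right difficulty but does not resolve it --- the second moment cannot be bounded without precisely that independence-of-conditioning statement. Finally, note that the case split is not cosmetic: the algorithm only constructs $\H_i$ and runs the matching (Step~\algoref{algo:find-long-conpaths}) for components lacking $k/2$ short paths, so the set of long paths your analysis should range over is the matched set, not the raw set from \cref{lem:k-connector-paths}.
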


%Let us now introduce the helper graph $\H_i$

\subsection{Helper Graph $\mathcal{H}_i$}
\label{sec:virt-graph-hi}

Finding internally vertex-disjoint long connector paths is only
relevant if a component has less than $k/2$ internally vertex-disjoint
short connector paths. As each component has at least $k$ internally
vertex disjoint connector paths, the component must have at least
$k/2$ long connector paths in this case. We introduce a helper graph
in this section, which is defined such that a maximum matching in this
graph corresponds to finding a maximum number of internally
vertex-disjoint long connector paths.

For each class $i$ on an upper layer $l$ we define the helper graph
$\H_i^l$ as the~union of the helper graphs $\H_i^l[\C]$ constructed
for each component $\C$ of class $i$ on layer~$l$. Note that although
the helper graphs are constructed for each layer, we omit $l$ in the
following as the helper graphs are used only in the layer in which
they are constructed. Thus, we always refer to the helper graph of the
current level.

We define the helper graph $\H_i[\C]$ for class $i$ and
component $\C$.  For each \mbox{type-2} node $v$ of layer $l$ we add a node
$v_{\C}$ to $\H_i[\C]$ iff the following conditions are met
\begin{enumerate}[1)]
\item \label{cond:hi-1} $\Psi(v) \not \in \Psi(\C)$
\item \label{cond:hi-2} $v$ has a neighbor in $\C$
\item \label{cond:hi-3} $v$ does not have a neighbor belonging to
  another component of class $i$
\end{enumerate}
For each node $v_C$ we added to $\H_i[\C]$, we add for each type-1
neighbor $w$ of $v$ a node $w_C$ to $\H_i[\C]$, if $w$ has a neighbor
in another component $\C'$ of class $i$ but no neighbor in component
$\C$. Intuitively , this procedure ensures that we added the potential
long connector path of component $\C$ to $\H_i[\C]$ using a type-2
node as the node closer to $C$ and a type-1 node as the node closer to
the neighboring component of the same class.  An edge between $v_C$
and $w_C$ is added to $\H_i$ as there is an edge between $v$ and $w$
in $\G$. 
\begin{figure}[h]
  \centering
  \includegraphics[width=0.7\linewidth]{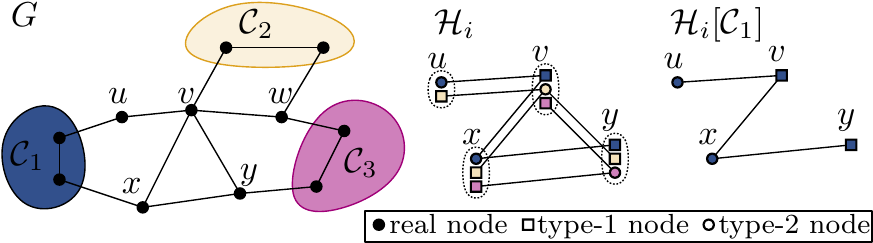}
  \caption{\small A graph G with three components of a class $i$, along with
    the helper graph $\H_i$ and $\H_i[\C_1]$ restricted to component
    $C_1$. Note that $w$ is not in the helper graph as it is on a long
  connector path.}
  \label{fig:hi-illustration}
\end{figure}
We illustrate the construction of the helper graph $\H_i$ in
\cref{fig:hi-illustration}.  The connection between long connector
paths of a component and edges in $\H_i$ is shown in the following
lemma.

\begin{lemma}
  \label{lem:hi-edge-equals-connector-paths}
  There is an edge $(v_C,w_C)$ in $\H_i[\C]$ iff there is a long
  connector path from $\C$ to another component of class $i$ through
  $v$ and $w$ on the current layer.
\end{lemma}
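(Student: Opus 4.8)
The plan is to prove the two directions of the biconditional separately, matching the definition of the helper graph $\H_i[\C]$ against the five defining conditions \ref{cond:cp-1}--\ref{cond:cp-5} of a connector path stated in \cref{sec:preliminaries}. Since a long connector path has the form $P = (s, v, w, u)$ with $s \in \C$, $u \in \C'$, and (by condition \ref{cond:cp-4}) $v$ of type-2 and $w$ of type-1, the natural correspondence is to let the type-2 internal vertex $v$ play the role of the node $v_C$ added by conditions \ref{cond:hi-1}--\ref{cond:hi-3}, and the type-1 internal vertex $w$ play the role of the neighbor $w_C$. An edge $(v_C, w_C)$ in $\H_i[\C]$ exists precisely when $v$ and $w$ are adjacent in $\G$, by the edge-insertion rule of the construction, so the heart of the argument is to show that the vertex-membership conditions are equivalent to the endpoint- and shortening-conditions of a long connector path.

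First I would prove the forward direction: given an edge $(v_C, w_C)$ in $\H_i[\C]$, construct a long connector path. By construction, $v$ is type-2 with $\Psi(v) \notin \Psi(\C)$ (condition \ref{cond:hi-1}), $v$ has a neighbor $s$ in $\C$ (condition \ref{cond:hi-2}), and $v$ has no neighbor in any other component of class $i$ (condition \ref{cond:hi-3}). Likewise $w$ is type-1, adjacent to $v$, with a neighbor $u$ in some other component $\C'$ of class $i$ but no neighbor in $\C$. Setting $P = (s, v, w, u)$, I would verify each condition in turn: \ref{cond:cp-1} holds since $s \in \C$ and $u \in \C'$; \ref{cond:cp-2} holds as $P$ has exactly two internal vertices $v, w$; \ref{cond:cp-4} holds by the type assignment; and \ref{cond:cp-3} (non-shortenability) holds because $v$ has no neighbor in $\C$ other than via $s$—more precisely, the shortening condition requires that $u$ has no neighbor in $\C'$ (it does, but the condition concerns the \emph{first} internal vertex not reaching $\C'$ and the \emph{last} not reaching $\C$) so I must check that $w$'s lack of a neighbor in $\C$ and $v$'s lack of a neighbor in the other component together preclude a length-two shortcut. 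The key point is that condition \ref{cond:hi-3} on $v$ and the ``no neighbor in $\C$'' clause on $w$ are exactly what rules out the path collapsing to a short connector path.

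For the converse, given a long connector path $P = (s, v, w, u)$ I would show that the nodes $v_C$ and $w_C$ are added to $\H_i[\C]$ and that the edge between them is present. Here $v$ is type-2 (by \ref{cond:cp-4}); it has neighbor $s \in \C$ giving \ref{cond:hi-2}, and $\Psi(v) \notin \Psi(\C)$ since $v$ is an internal vertex of a non-shortenable path, giving \ref{cond:hi-1}; condition \ref{cond:hi-3} follows because if $v$ had a neighbor in another class-$i$ component the path $(s,v,\cdot)$ could be shortened to length two, contradicting \ref{cond:cp-3}. Symmetrically $w$ is type-1 with a neighbor $u \in \C'$ and, by non-shortenability, no neighbor in $\C$, so $w_C$ is added. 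The adjacency of $v$ and $w$ in $\G$ then forces the edge $(v_C, w_C)$ into $\H_i[\C]$.

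I expect the main obstacle to be the careful bookkeeping around the shortening condition \ref{cond:cp-3} and its interplay with helper-graph condition \ref{cond:hi-3}: one must argue that ``$v$ has no neighbor in another class-$i$ component'' is equivalent to ``the path cannot be shortened at the $v$-end,'' and dually for $w$. A subtle case to handle is whether $v$ could itself be adjacent to $\C'$ (not just to $\C$), which would allow a short connector path and would mean $v$ is excluded from $\H_i[\C]$ by condition \ref{cond:hi-3}; I would need to confirm that the helper graph genuinely records only the \emph{long} connector paths, so that the biconditional is tight and does not inadvertently capture short ones. Once these equivalences are pinned down, the remainder is a direct matching of definitions.
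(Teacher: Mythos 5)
Your proposal is correct and follows essentially the same route as the paper's proof: both directions are handled by directly matching the helper-graph conditions \ref{cond:hi-1})--\ref{cond:hi-3}) against the connector-path conditions \ref{cond:cp-1})--\ref{cond:cp-5}) via the path $P=(s,v,w,u)$. If anything, your treatment of the non-shortenability condition \ref{cond:cp-3}) and its equivalence with condition \ref{cond:hi-3}) is more explicit than the paper's, which asserts these correspondences without spelling out why a neighbor of $v$ in another class-$i$ component would permit a shortening.
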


\begin{proof}
  Let us first assume an edge $(v_C,w_C)$ is added to $\H_i[\C]$. Then
  $v$ has at least one neighbor in $\C$, which we denote by $s$. Also,
  $v$ has a neighbor $w$ (of type-1) which does not have a neighbor in
  component $\C$ but has at least one in a component $\C' \not= \C$ of
  class $i$. Let us denote this neighbor by $u$. We claim that
  $P = s,v,w,u$ is a long connector path (cf. \cref{sec:preliminaries}
  for the definition), which holds as \ref{cond:cp-1}) P has one
  endpoint in $\C$, the other in $\C' \not= \C$ of class $i$,
  \ref{cond:cp-2}) P has two internal vertices, \ref{cond:cp-3}) P
  cannot be shortened as $v$ does not have a neighbor in a component
  $\C' \not= \C$ of class $i$ and $w$ does not have a neighbor in
  $\C$, and \ref{cond:cp-4}) $v$ is of type-2 while $w$ is of type-1.

  Let us now assume we have a long connector path $P = s,v,w,u$. It
  holds that \ref{cond:hi-1}) $w$ does not have a neighbor in $\C$,
  \ref{cond:hi-2}) $s$ is in $\C$, and \ref{cond:hi-3}) $w$ is not in
  $\C$ and $v$ of type-2. Thus, $v_C$ is added to
  $\H_i[\C]$. Also, $w$ is of type-1 and has a neighbor~$u$~that
  is in another component $\C' \not= \C$ of class $i$ but no neighbor
  of component $\C$, which implies that $w_C$ and the corresponding
  edge $(v_C,w_C)$ are added as well.
\end{proof}

The matching algorithm is executed on $\H_i$, which is the union of
helper graphs for each component, however, observe that we know for
each edge in $\H_i$ from which component it is induced.

\begin{observation}
  \label{obs:hi-components-disjoint}
  Given an arbitrary layer $l \geq L$, a class $i$, and the
  corresponding helper graph $\H_i$.  Then each edge in the helper
  graph can be attributed to exactly one component $\C$.
%  \ff{give brief argumentation?}
\end{observation}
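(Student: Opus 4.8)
The plan is to establish the two halves of ``exactly one'' separately: every edge of $\H_i$ is attributed to \emph{at least} one component, and to \emph{at most} one. The first half is immediate from the construction, since $\H_i$ is defined as the union of the graphs $\H_i[\C]$, so any edge of $\H_i$ is an edge $(v_\C, w_\C)$ of some $\H_i[\C]$ and is thereby attributed to that $\C$. All of the actual work lies in the second half, and the key idea I would use is to attribute each edge through its \emph{type-2} endpoint rather than its type-1 endpoint.

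To this end I would first recall that, by construction, every edge of $\H_i$ joins a type-2 node $v_\C$ to a type-1 node $w_\C$ (this is also what \cref{lem:hi-edge-equals-connector-paths} encodes, via condition~\ref{cond:cp-4}). Hence it suffices to show that the underlying type-2 node $v$ of $\G$ can belong to the helper graph of at most one component. This is exactly where condition~\ref{cond:hi-3} enters: for $v_\C$ to be added to $\H_i[\C]$ we need $v$ to have a neighbor in $\C$ (condition~\ref{cond:hi-2}) while having \emph{no} neighbor in any other component of class $i$ (condition~\ref{cond:hi-3}). Together these say that $\C$ is the \emph{unique} component of class $i$ that contains a neighbor of $v$.

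From this the uniqueness follows directly. Suppose, for contradiction, that a single underlying vertex $v$ of $\G$ were the type-2 endpoint of edges in both $\H_i[\C]$ and $\H_i[\C']$ with $\C \neq \C'$. Then $v_\C \in \H_i[\C]$ and $v_{\C'} \in \H_i[\C']$, so condition~\ref{cond:hi-2} applied to $\C'$ forces $v$ to have a neighbor in $\C'$; since $\C' \neq \C$, this contradicts condition~\ref{cond:hi-3} for $\C$, which forbids $v$ from having a neighbor in any component of class $i$ other than $\C$. Hence the type-2 endpoint of an edge determines its component uniquely, and every edge of $\H_i$ is attributed to exactly one $\C$.

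The point that needs care, rather than a genuine obstacle, is that one must \emph{not} argue via the type-1 endpoint $w$: a type-1 node may legitimately sit in several helper graphs $\H_i[\C]$ simultaneously, because condition~\ref{cond:hi-3} is imposed only on the type-2 side (a type-1 node is added for each component to whose \emph{other} side it can connect). The observation therefore hinges on the asymmetry between the two endpoints, and routing the attribution through the type-2 node is precisely what makes the associated component well-defined.
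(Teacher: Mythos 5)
Your proof is correct. The paper itself gives no explicit argument for this observation: it treats the attribution as immediate from the construction, since $\H_i$ is assembled as the union of the per-component graphs $\H_i[\C]$ and every node added to $\H_i[\C]$ carries the label $\C$ (the remark preceding the observation says only that ``we know for each edge in $\H_i$ from which component it is induced''). What you add is a genuine well-definedness argument at the level of the underlying virtual nodes: using conditions~\ref{cond:hi-2}) and~\ref{cond:hi-3}) you show that the type-2 endpoint of an edge can be admitted for at most one component, so the attribution is unique even if one forgets the formal labels. This is sound, and it is in fact the same asymmetry the paper records separately as \cref{obs:hi-node-copies} (one type-2 copy per real node, but up to $\Delta$ type-1 copies), which is there used to bound the cost of the matching rather than to justify the present observation. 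Your version buys a slightly stronger and more self-contained statement -- the component is recoverable from the edge's type-2 endpoint alone -- at the cost of doing work the paper's bookkeeping convention makes unnecessary; your caveat about not routing the argument through the type-1 endpoint is exactly right, since a type-1 node may legitimately appear in several $\H_i[\C]$.
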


We have shown that the construction ensures that there is a vertex
disjoint long connector path through $v$ and $w$ for component $\C$
iff there is an edge between $v_C$ and $w_C$ in $\H_i[\C]$. Thus a
matching induces long connector paths. We shall argue in
\cref{sec:matching} that we can compute a matching of size $\Omega(k)$
in $\H_i$ for each component of class $i$ with $\Omega(k)$ long
connector paths. However, let us first describe the distributed
algorithm to construct $\H_i$

\subsection{Distributed Construction $\H_i$}
\label{sec:constructing-hi}

Due to Step~\algoref{algo:identify-con-components} of the algorithm,
which is executed for each layer before constructing the graph $\H_i$,
each type-2 node $v$ knows the classes and components of its
neighbors. Thus $v$ can decide whether a node $v_C$ should be added
for neighboring components $\C$. Note that due to
\cref{lem:type-2-node-on-leq-one-path} a type-2 node lies only on one
long connector path for each class, however, up to $t$ 
components may have a long connector path through $v$, see
\cref{obs:type-2-node-leq-delta-components}. If $v$ adds $v_C$ for a
component $\C$ to $\H_i[\C]$, it transmits this information along with the
class $i$ and the id of $\C$ (which is also used in
Step~\algoref{algo:identify-con-components}) to its type-1
neighbors. These type-1 neighbors can now easily check whether they
have a neighbor in $\C$ and resign, or verify if at least one neighbor
is in another component of class $i$ due to information obtained
during Step~\algoref{algo:identify-con-components}. If so, $w$ adds
$w_C$ and the edge between $w_C$ and $v_C$~to~$\H_i[\C]$.

\begin{lemma}
  \label{lem:hi-construction-runtime}
  For each class $i$ on layer $l$ it requires $\O(\Delta)$ rounds
  to construct $\H_i$. 
\end{lemma}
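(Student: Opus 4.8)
The plan is to analyze the distributed construction described immediately above the lemma and show that each of its two communication phases takes $\O(\Delta)$ rounds in the V-$\mathcal{CONGEST}$ model. Recall that the construction proceeds in two stages: first each type-2 node $v$ decides (based on information already available from Step~\algoref{algo:identify-con-components}) which nodes $v_\C$ to add and announces this to its type-1 neighbors; then each type-1 neighbor $w$ checks the remaining membership conditions and adds $w_\C$ together with the edge $(v_\C,w_\C)$. Since no further communication is needed to establish the edges once both endpoints know they belong to $\H_i[\C]$, the round complexity is governed entirely by how expensive it is for the type-2 nodes to communicate the relevant component ids to their type-1 neighbors.

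\emph{First} I would observe that, by Step~\algoref{algo:identify-con-components}, every type-2 node $v$ already knows the class and component id of each of its neighbors, so the decision whether to add $v_\C$ (conditions~\ref{cond:hi-1})--\ref{cond:hi-3})) is purely local and requires no rounds. The only genuine communication is the transmission in which $v$ informs its type-1 neighbors about each component $\C$ for which it added $v_\C$, together with the class $i$ and the id of $\C$. \emph{Second} I would bound the amount of information $v$ must send. Here the key input is \cref{obs:type-2-node-leq-delta-components}: a type-2 node can lie on long connector paths for at most $\O(\Delta)$ distinct components (across all classes), since each such component must have a neighbor of $v$ in it and $v$ has at most $\Delta$ real neighbors. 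Hence $v$ adds at most $\O(\Delta)$ nodes $v_\C$ in total, so it has at most $\O(\Delta)$ messages to disseminate.

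\emph{Third}, because the V-$\mathcal{CONGEST}$ model allows a node to broadcast one (bounded-size) message per round to all neighbors, $v$ can transmit its $\O(\Delta)$ component announcements in $\O(\Delta)$ rounds, one per round. All type-2 nodes do this in parallel, and since each announcement is a single broadcast received simultaneously by all type-1 neighbors, the congestion at any type-1 receiver is also $\O(\Delta)$: a type-1 node $w$ receives at most one message per round from each of its type-2 neighbors, and the process terminates once every type-2 node has exhausted its (at most $\O(\Delta)$) announcements. After receiving each announcement, $w$ checks conditions~\ref{cond:hi-1})--\ref{cond:hi-3}) locally using its Step~\algoref{algo:identify-con-components} knowledge and decides whether to add $w_\C$ and the edge, again with no extra rounds. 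Summing, both stages cost $\O(\Delta)$, giving the claimed bound.

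The step I expect to be the main obstacle is the congestion accounting at the type-1 side: one must be careful that a type-1 node $w$ is not overwhelmed by simultaneous announcements from many type-2 neighbors. The clean way to handle this is to note that in V-$\mathcal{CONGEST}$ the relevant bottleneck is how many messages a single \emph{sender} must push out (since each broadcast reaches all neighbors at once), and by \cref{obs:type-2-node-leq-delta-components} every sender has only $\O(\Delta)$ messages; thus a simple round-robin schedule in which each type-2 node emits one announcement per round completes in $\O(\Delta)$ rounds regardless of degrees at the receivers, which is exactly the bound to be proven.
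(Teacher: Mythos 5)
Your accounting of where the $\O(\Delta)$ comes from does not match the construction, and in the process you drop the step that actually costs $\Delta$ rounds. First, the lemma is per class $i$, and Condition~\ref{cond:hi-3}) forces a type-2 node $v$ to participate in $\H_i$ only if \emph{all} of its class-$i$ neighbors lie in a single component $\C$; hence $v$ adds at most one node $v_\C$ per class and sends exactly one announcement, not $\O(\Delta)$ of them. Your appeal to \cref{obs:type-2-node-leq-delta-components} is also a misreading: that observation bounds the number of long connector paths through a type-2 node by $t\in\Theta(k)$ across all classes, not by $\Delta$, and it is not what drives this lemma. So the sender side you analyze costs $\O(1)$, not $\O(\Delta)$.

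The genuine gap is your claim that after receiving the announcements a type-1 node $w$ decides everything locally ``with no extra rounds'' and that ``no further communication is needed to establish the edges.'' The type-2 endpoint $v$ cannot determine locally whether $w$ satisfies the conditions for $w_\C$ (whether $w$ has a neighbor in $\C$, or in another component of class $i$), yet $v$ must know its incident edges in $\H_i$ to run the subsequent matching (it assigns random numbers to its incident edges and proposes one of them). Therefore $w$ must respond once for each announcement it received, and since $w$ may receive up to $\Delta$ distinct announcements (one from each type-2 neighbor, about different components) and in V-$\mathcal{CONGEST}$ can only broadcast one message per round, this response phase takes up to $\Delta$ rounds. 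That is precisely where the paper's proof locates the $\O(\Delta)$ bound; your proof omits it, so while you state the right final bound, the argument as written does not establish a construction of $\H_i$ that the matching algorithm can actually use.
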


\begin{proof}
  First note that each real node $v$ simulates exactly the two virtual
  copies of $v$ on layer $l$. Due to Conditions \ref{cond:hi-2}) and
  \ref{cond:hi-3}) in the definition of $\H_i$, the type-2 copy of a
  node participates in $\H_i$ only if its neighbors of class $i$
  belong to the same component $\C$. In this case, $v$ sends the id of
  $\C$, which requires one message.  After receiving these messages,
  each type-1 node $w$ transmits one message for each message they
  received from a type-2 node. Note that as $w$ has received at most
  one message from each neighbor, $w$ responds to at most $\Delta$
  messages. Hence, this results in $\O(\Delta)$ messages.
\end{proof}

The following observation follows from the fact that a type-2 node is
only added to $\H_i$ if all its neighbors of the class $i$ are in one
component, while one type-1 copy is added for each message received by
another type-2 node. It helps bounding the runtime of our matching
algorithm operating on $\H_i$.

\begin{observation}
\label{obs:hi-node-copies}
  For each node $v \in V$, there is at most one type-2 copy in $\H_i$, but up to
  $\Delta$ type-1 copies in $\H_i$.
\end{observation}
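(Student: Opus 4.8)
The plan is to read both bounds directly off the admission conditions for $\H_i$, handling the two copy types separately. The type-2 bound will follow from Condition~\ref{cond:hi-3}) alone, and the type-1 bound will follow by counting how many distinct component-subgraphs $\H_i[\C]$ a single type-1 copy can enter, reusing the type-2 bound as a building block.

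For the type-2 copies I would argue by contradiction. Fix a real node $v$ and suppose its type-2 copy on layer $l$ were added as $v_{\C}$ to two helper subgraphs $\H_i[\C]$ and $\H_i[\C']$ for distinct components $\C\neq\C'$ of class $i$. By Condition~\ref{cond:hi-2}) the copy then has a neighbor in $\C$ and a neighbor in $\C'$. But viewed from $\C'$, the component $\C$ is another component of class $i$, so the copy has a neighbor in a component of class $i$ other than $\C'$, contradicting Condition~\ref{cond:hi-3}) for its membership in $\H_i[\C']$. Hence the type-2 copy of $v$ belongs to at most one $\H_i[\C]$, giving at most one type-2 copy of $v$ in $\H_i$.

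For the type-1 copies I would bound the number of distinct components $\C$ whose subgraph $\H_i[\C]$ contains a type-1 copy of $v$. Each such $\C$ arises only because some type-2 neighbor $u$ of $v$ has $u_{\C}\in\H_i[\C]$ and $v$ meets the type-1 admission rule (a neighbor in some component $\C'\neq\C$, no neighbor in $\C$). I would map each such component $\C$ to one of its triggering type-2 neighbors $u$; this map is injective, since if two distinct components shared a trigger $u$ then $u$'s type-2 copy would lie in two helper subgraphs, contradicting the type-2 bound just established. Hence the number of type-1 copies of $v$ is at most the number of type-2 neighbors of $v$ on layer $l$, which is at most $\Delta+1$: the type-2 copies of $v$'s $\leq\Delta$ real neighbors, plus $v$'s own type-2 copy.

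The one step that needs care—and the main obstacle—is shaving off the self-copy to reach the stated bound $\Delta$ rather than $\Delta+1$. Here I would observe that the type-1 and type-2 copies of the same real node $v$ have identical neighborhoods among the old nodes (layers $1$ to $l-1$), because both are connected to the same copies of $v$'s real neighbors and of $v$ itself in the lower layers. Consequently they see exactly the same components of class $i$: if $v$'s type-2 copy has a neighbor in $\C$ (as required for $u_{\C}\in\H_i[\C]$ with $u$ the self-copy), then $v$'s type-1 copy also has a neighbor in $\C$, which violates the ``no neighbor in $\C$'' clause of the type-1 admission rule. Thus the self-copy never triggers an addition, the count drops to $\Delta$, and both claims follow.
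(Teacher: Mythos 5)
Your proof is correct and follows essentially the same reasoning the paper uses (the paper only gives a one-sentence justification: condition~\ref{cond:hi-3}) forces a unique component for the type-2 copy, and a type-1 copy arises once per message from a type-2 neighbor). Your injective-map formulation and the observation that the node's own type-2 copy can never trigger a type-1 copy of the same node are careful elaborations of that same argument rather than a different route.
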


\subsection{Matching internal vertices}
\label{sec:matching}

Let us now consider how to distributively compute a matching of
cardinality $\Omega(k)$ in the helper graph $\H_i$ for each
component. We shall use this in the next section to prove that each
component finds a long connector paths with constant probability in
each layer.
We use a randomized distributed maximal matching algorithm, which was
proposed by Censor-Hillel \emph{et al.} in \cite{chgk-dcd-14} and is
based on Luby's distributed maximal independent set algorithm
\cite{l-spamis-86}. However, in our case each node in the helper graph
$\H_i$ is simulated by only one node at not by several nodes that have
to coordinate their actions through components. 
%We use a variant of a simple randomized distributed node matching algorithm
%such as the one described by Israeli and Itai in \cite{ii-adsrp-86}, however,
%executed on the helper graph $\H_i$. 
The algorithm makes use of the special structure of $\H_i$.

\begin{lemma}
\label{obs:helper-graph-bipartite}
The helper graph $\H_i$ is bipartite.
\end{lemma}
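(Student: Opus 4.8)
The plan is to show that the helper graph $\H_i$ admits a bipartition into type-2 nodes on one side and type-1 nodes on the other, and that no edge ever joins two nodes of the same type. Recall from the construction in Section~\ref{sec:virt-graph-hi} that the vertex set of $\H_i[\C]$ is built in two stages: first we add nodes $v_{\C}$ corresponding to \emph{type-2} nodes $v$ satisfying Conditions \ref{cond:hi-1})--\ref{cond:hi-3}), and only afterwards, for each such $v_{\C}$, we add nodes $w_{\C}$ corresponding to \emph{type-1} neighbors $w$ of $v$. Hence every vertex of $\H_i$ carries an intrinsic label ``type-1'' or ``type-2'' inherited from its preimage in $\G$, and this labelling induces a natural candidate bipartition $(A,B)$ where $A$ collects all type-2 copies and $B$ all type-1 copies.

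First I would verify that $A$ and $B$ are well-defined and disjoint. By \cref{obs:hi-node-copies} each real node contributes at most one type-2 copy and possibly several type-1 copies, and a type-1 and a type-2 copy are distinct virtual nodes of $\G$, so no vertex of $\H_i$ lies in both $A$ and $B$. Next I would show every edge of $\H_i$ crosses between $A$ and $B$. This is immediate from the construction rule: an edge $(v_{\C},w_{\C})$ is added \emph{only} when $v_{\C}$ is a type-2 vertex already present and $w_{\C}$ is a type-1 neighbor being added on its behalf. There is no construction rule that ever inserts an edge between two type-2 vertices or between two type-1 vertices. Thus every edge has exactly one endpoint in $A$ and one in $B$, which is precisely the definition of a bipartite graph with parts $A$ and $B$.

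Since $\H_i$ is the union of the component-wise graphs $\H_i[\C]$ over all components $\C$ of class $i$, I would finally note that the type-labelling is global (it comes from $\G$, not from any particular $\C$), so the bipartition is consistent across the union: a type-2 vertex stays in $A$ and a type-1 vertex stays in $B$ regardless of which $\H_i[\C]$ contributed it. By \cref{obs:hi-components-disjoint} the edge sets of distinct components are in any case attributable to a single component, so no edge created in one $\H_i[\C]$ can accidentally connect same-type vertices across components. Combining these observations gives that $(A,B)$ is a valid bipartition of all of $\H_i$.

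I do not expect any genuine obstacle here; the statement follows directly from the asymmetric, type-respecting construction of the helper graph, and the only care needed is to confirm that the union over components does not spoil the type-based two-colouring. The main point to state cleanly is that the type labels are inherited from $\G$ and are therefore shared consistently among all component graphs $\H_i[\C]$, so the single colouring $(A,B)$ witnesses bipartiteness of the whole graph $\H_i$.
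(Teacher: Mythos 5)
Your proof is correct and follows essentially the same route as the paper: partition the vertices of $\H_i$ by the type (1 or 2) inherited from $\G$ and observe that the construction only ever adds edges between a type-2 vertex and a type-1 vertex. The extra care you take about consistency of the labelling across the union of the $\H_i[\C]$ is fine but not needed beyond what the paper already states.
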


\begin{proof}
As described in the previous section, all nodes in the helper graph $\H_i$ are 
either added by a type-1 or type-2 node of $\G$. Hence, we may say that the 
nodes of the helper graph also have types, which induce a partitioning of the 
nodes of $\H_i$.
For two nodes $v_C$ and $w_C$ to be connected in $\H_i$, it must hold that $v$ 
is of type 2 and $w$ of type 1. Thus, all edges in $\H_i$ connect nodes of 
different types, which shows that the graph is bipartite. 
\end{proof}

Using this lemma the matching algorithm operates as follows.  A node
is \emph{active} exactly if none of the adjacent edges is matched, and
an edge is \emph{active} if both adjacent nodes are active. In each
round, we assign random numbers from a sufficiently large range to all
active edges such that no two edges have the same number whp. Since
$\H_i$ is bipartite, assigning the numbers is particularly easy as
each type-2 node can pick a number for each incident edge.  Each
active type-2 node then selects the edge with the largest number and
sends its choice to its neighbors. In this round only the selected
edges may be added to the matching. At this point, there is at most
one edge selected at each type-2 node. However, each type-1 node may
have received more than one proposal. To satisfy the matching
condition, each type-1 node that has received at least one proposal
picks the proposed edge with the largest number and adds it to the
matching. The two matched nodes and their edges become inactive. It
can be shown that after $\O(\log n)$ rounds all edges are deactivated
with high probability and a maximal matching is achieved. Let us
now show that such a maximal matching is of cardinality $\Omega(k)$ if
the corresponding component has $\Omega(k)$ long connector paths.

\begin{lemma}
  \label{lem:max-matching-const-approx}
  Given a component $\C$ of class $i$ with $\Omega(k)$ long connector
  paths. A maximal matching in $\H_i[C]$ is of cardinality at least
  $\Omega(k)$.
\end{lemma}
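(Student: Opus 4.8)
The plan is to reduce the statement to two classical observations: first, that the hypothesis already supplies a large matching inside $\H_i[\C]$, and second, that any maximal matching approximates the maximum matching within a factor of two. Together these force every maximal matching to have cardinality $\Omega(k)$.

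First I would make the correspondence between paths and edges explicit. By assumption $\C$ has $\Omega(k)$ \emph{internally vertex-disjoint} long connector paths (this is the relevant reading, guaranteed by \cref{lem:k-connector-paths} together with the discussion opening this section). Each such path $P = (s,v,w,u)$ yields, by \cref{lem:hi-edge-equals-connector-paths}, an edge $(v_C, w_C)$ in $\H_i[\C]$, where $v$ is the type-2 internal node and $w$ the type-1 internal node. Because the paths are internally vertex-disjoint, their internal node sets $\{v,w\}$ are pairwise disjoint, so the induced edges $(v_C,w_C)$ are pairwise vertex-disjoint in $\H_i[\C]$. Hence these edges constitute a matching $M^\ast$ of size $\Omega(k)$, and in particular the maximum matching of $\H_i[\C]$ has cardinality at least $\Omega(k)$.

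Next I would invoke the standard factor-two bound for maximal matchings. Let $M$ be any maximal matching of $\H_i[\C]$. Every edge $e \in M^\ast$ must share an endpoint with some edge of $M$; otherwise $e$ could be added to $M$, contradicting maximality. Since $M$ is a matching, each of its edges has two endpoints, and each such endpoint lies in at most one edge of the matching $M^\ast$. Thus each edge of $M$ is charged by at most two edges of $M^\ast$, giving $|M^\ast| \le 2|M|$, i.e.\ $|M| \ge |M^\ast|/2 = \Omega(k)$. As $M$ was an arbitrary maximal matching, in particular the maximal matching produced by the algorithm described above has cardinality $\Omega(k)$, as claimed.

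The argument is essentially routine once the path-to-edge translation is in place, so I do not expect a genuine obstacle. The only point requiring care is the first step: one must confirm that the $\Omega(k)$ long connector paths may be taken internally vertex-disjoint—so that distinct internal nodes really do produce distinct, disjoint edges—and that no two of them collapse onto the same edge $(v_C,w_C)$, which cannot happen since two paths sharing both internal vertices are not internally vertex-disjoint.
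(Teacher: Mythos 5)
Your proposal is correct and follows essentially the same route as the paper: both extract $\Omega(k)$ independent edges in $\H_i[\C]$ from the internally vertex-disjoint long connector paths via \cref{lem:hi-edge-equals-connector-paths}, and then invoke the fact that a maximal matching is a $2$-approximation of the maximum matching. You merely spell out the charging argument for the factor-two bound, which the paper cites as standard.
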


\begin{proof}
  If follows from \cref{lem:k-connector-paths} and the one-to-one
  correspondence of the long connector paths and the edges in $\H_i$
  of \cref{lem:hi-edge-equals-connector-paths} that there are
  $\Omega(k)$ independent edges in $\H_i[C]$. Thus, the maximum
  matching is of size at least $\Omega(k)$, as well as the maximal
  matching as it is a 2-approximation of the maximum matching.
\end{proof}

After showing that the matching is of sufficient size, we prove that
this allows us to identify the $\Omega(k)$ long connector paths for
each component.

\begin{lemma}
  \label{lem:max-matching-k-paths-per-component}
  Consider a component $\C$ of class $i$ with $\Omega(k)$ long
  connector paths. Then a maximal matching in $\H_i$ identifies
  $\Omega(k)$ long connector paths for $\C$.
\end{lemma}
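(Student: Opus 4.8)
The plan is to reduce the statement about the union graph $\H_i$ to the single-component statement already established in \cref{lem:max-matching-const-approx}. The central observation I would exploit is that the subgraphs $\H_i[\C]$ are vertex-disjoint inside $\H_i$, so that a maximal matching of the whole graph decomposes into a maximal matching of each piece, and each piece can then be handled by the previous lemma.

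First I would argue that $\H_i$ is the vertex-disjoint union of the graphs $\H_i[\C]$ over all components $\C$ of class $i$. By \cref{obs:hi-node-copies} every real node contributes at most one type-2 copy to $\H_i$, and by Condition~\ref{cond:hi-3}) of the construction that copy is attributed to the single component in which all of its class-$i$ neighbors lie; hence a type-2 vertex $v_C$ occurs in exactly one $\H_i[\C]$. A type-1 real node may contribute several copies, but each copy carries a distinct component label $C$, so the copies $w_C$ and $w_{C'}$ belonging to different components are distinct vertices of $\H_i$. Consequently the vertex sets of $\H_i[\C]$ and $\H_i[\C']$ are disjoint for $\C \neq \C'$, which is the vertex-level counterpart of \cref{obs:hi-components-disjoint}.

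Given this, let $M$ be the maximal matching produced by the algorithm of \cref{sec:matching} on $\H_i$, and let $M_\C = M \cap E(\H_i[\C])$ be its restriction to the edges attributed to $\C$. Since the $\H_i[\C]$ are pairwise vertex-disjoint subgraphs of $\H_i$, I claim $M_\C$ is itself a maximal matching of $\H_i[\C]$: any edge of $\H_i[\C]$ left uncovered by $M_\C$ would be uncovered by $M$ as well, because every edge adjacent to it lives inside $\H_i[\C]$, contradicting maximality of $M$. \cref{lem:max-matching-const-approx} then yields $|M_\C| = \Omega(k)$ whenever $\C$ has $\Omega(k)$ long connector paths.

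Finally I would translate the matching back into paths. By \cref{lem:hi-edge-equals-connector-paths} each edge $(v_C,w_C) \in M_\C$ corresponds to a long connector path of $\C$ through the type-2 node $v$ and the type-1 node $w$. Distinct edges of $M_\C$ use distinct type-2 endpoints and distinct type-1 endpoints, and since type-1 and type-2 nodes are different copies of $V$, the internal vertices $\{v,w\}$ of these paths are mutually disjoint; thus $M_\C$ identifies $\Omega(k)$ internally vertex-disjoint long connector paths for $\C$. The one point requiring care — and the main obstacle — is the maximality-under-restriction step: it is valid only because the component helper graphs share no vertices, so I would make sure the vertex-disjointness claim of the second paragraph is argued explicitly rather than merely invoking the edge-disjointness of \cref{obs:hi-components-disjoint}.
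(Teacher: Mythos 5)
Your proof follows the same route as the paper's: restrict the maximal matching of $\H_i$ to $\H_i[\C]$, invoke \cref{lem:max-matching-const-approx} for the $\Omega(k)$ cardinality bound, and translate independent edges into internally vertex-disjoint long connector paths via \cref{lem:hi-edge-equals-connector-paths}. Your explicit argument that the subgraphs $\H_i[\C]$ are vertex-disjoint (needed so that the restriction of a maximal matching stays maximal) is a sound tightening of the paper's looser appeal to the edge-level \cref{obs:hi-components-disjoint}, but it does not constitute a different approach.
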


\begin{proof}
  Let us consider a maximal matching in $\H_i$, and component $\C$ as
  required. According to \cref{obs:hi-components-disjoint}, we can
  consider the subgraph $\H_i[\C]$ of $\H_i$ corresponding to
  component $\C$ as disjoint from other parts of $\H_i$. Thus, the
  matching is maximal also in $\H_i[\C]$.  It holds by
  \cref{lem:max-matching-const-approx} that the size of the maximal
  matching is $\Omega(k)$.  It remains to show that two independent edges
  in $\H_i[\C]$ correspond to two internally vertex disjoint connector
  paths. Consider the edges $(v_C,w_C)$ and $(v'_C,w'_C)$ and assume
  the corresponding long connector paths with internal vertices $v,w$
  and $v',w'$ are not internally vertex disjoint. Thus, either $v=v'$
  or $w=w'$ which implies either $v_C=v'_C$ or $w_C=w'_C$. This
  contradicts the assumption as the edges are not independent. As each
  matched edge is independent, the set of matched edges in $\H_i[\C]$
  corresponds to a set of internally vertex-disjoint long connector
  paths of cardinality $\Omega(k)$.
\end{proof}

The correspondence between a maximal matching in $\H_i$ and long
connector paths in $G$ is depicted in \cref{fig:hi-matching}.
Let us now consider the number of time slots required to compute the
maximal matching. The matching algorithm is executed once for every
class $i$, and operates on the virtual graph $\H_i$. The next lemma
proves that $\O(\Delta)$ time slots are sufficient for each round of
the matching algorithm.
\begin{figure}[h]
  \centering
  \includegraphics[width=0.7\linewidth]{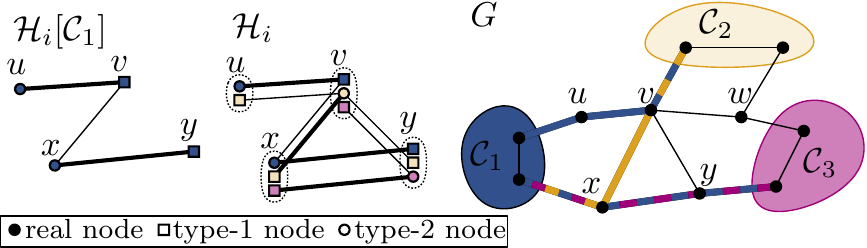}
  \caption{\small A maximal matching in the helper graph $\H_i$ is marked by
    bold lines. The corresponding long connector paths are marked in
    $G$.}
  \label{fig:hi-matching}
\end{figure}
\begin{lemma}
  \label{lem:matching-edge-used-leq-twice}
  In each round of the matching on $\H_i$ we transmit over
  each real edge at most twice in each direction, resulting in
  $\O(\Delta)$ time slots for each round.
\end{lemma}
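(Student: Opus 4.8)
The plan is to account for every message that one matching round sends across a fixed real edge, and then translate the resulting per-edge load into a number of V-$\mathcal{CONGEST}$ time slots. Recall that one round of the matching consists of two communication phases: first every active type-2 node broadcasts its chosen incident edge together with its random number (the proposal), and then every active type-1 node that received at least one proposal broadcasts an acknowledgement of the winning edge. Since $\H_i$ is bipartite with the two sides being exactly the type-2 and the type-1 nodes (\cref{obs:helper-graph-bipartite}), these are the only two kinds of messages, and every edge $(v_\C, w_\C)$ of $\H_i$ is carried by the real edge $(\Psi(v), \Psi(w)) \in E$.

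First I would fix a real edge $(a,b) \in E$ and bound the virtual edges of $\H_i$ it carries. By \cref{obs:hi-node-copies} each real node contributes at most one type-2 copy to $\H_i$. The type-2 copy at $a$ is added for a single component $\C$, and in $\H_i[\C]$ it is adjacent only to type-1 copies lying in the same component $\C$; among the type-1 copies located at $b$ at most one belongs to $\C$. Hence the type-2 copy at $a$ is joined in $\H_i$ to at most one type-1 copy at $b$, and symmetrically the type-2 copy at $b$ to at most one type-1 copy at $a$. In the proposal phase the single type-2 copy at $a$ sends at most one message towards $b$, and the type-2 copy at $b$ at most one towards $a$; in the acknowledgement phase the unique type-1 copy at $a$ adjacent to $b$'s type-2 copy sends at most one message towards $b$, and symmetrically towards $a$. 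Summing the two phases gives at most two transmissions over $(a,b)$ in each direction, which is the first claim.

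For the time-slot count I would charge the messages to the sending real node. A real node $a$ broadcasts at most one proposal, because it owns at most one type-2 copy (\cref{obs:hi-node-copies}), so the proposal phase costs a single slot in which all nodes propose in parallel. In the acknowledgement phase $a$ may have to speak on behalf of up to $\Delta$ distinct type-1 copies (again \cref{obs:hi-node-copies}), and these acknowledgements are addressed to different type-2 neighbours and generally differ, so they cannot be combined into a single $\O(\log n)$-bit broadcast. Serialising them, $a$ needs at most $\Delta$ broadcasts, and all nodes can run their $j$-th broadcast simultaneously in the $j$-th slot, since V-$\mathcal{CONGEST}$ only limits a node to one transmission per slot while letting it receive from every neighbour. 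Thus each round finishes in $1 + \Delta = \O(\Delta)$ time slots.

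The main obstacle is conceptual rather than computational: one must notice that the benign per-edge bound of two messages does \emph{not} translate into $\O(1)$ rounds, because V-$\mathcal{CONGEST}$ bills congestion to nodes, and a single real node simulates up to $\Delta$ type-1 copies whose acknowledgements have distinct recipients. Making the $\Delta$ type-1 copies — rather than the single type-2 copy — the quantity that governs the round length is exactly what \cref{obs:hi-node-copies} supplies, and it is the step I would take the most care to justify.
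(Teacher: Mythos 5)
Your proof is correct and follows essentially the same route as the paper's: both arguments rest on \cref{obs:hi-node-copies} (one type-2 copy, up to $\Delta$ type-1 copies per real node) and the two-phase proposal/acknowledgement structure, and both conclude that each real edge carries at most one type-2 and one type-1 message per direction. The only cosmetic difference is that you rule out two type-1 copies of the same real node using the same real edge by a structural adjacency argument (the single type-2 copy at the other endpoint lives in only one component), whereas the paper derives a contradiction from the type-2 copy of $w$ having sent two messages; your per-node accounting of the $\O(\Delta)$ slot bound is also a slightly more explicit version of the paper's terse final sentence.
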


\begin{proof}
  Let $v$ be an arbitrary real node, and note that there may be up to
  $\Delta$ copies of $v$ as type-1 node in $\H_i$, but only one copy
  of $v$ as type-2 node in $\H_i$, cf. \cref{obs:hi-node-copies}.
  Consider any real edge from $v$ to an arbitrary neighbor $w$. We may
  assume that there is at least one copy of the edge $(v,w)$ in the
  helper graph $\H_i$, as otherwise this edge is not used for the
  matching algorithm at all. Since the type-2 copy of $v$ sends only
  one message over one of its incident edges, it uses the edge $(v,w)$
  at most once.  After the type-1 copies of $v$ have received the
  messages from the type-2 nodes, they respond to one of them. Hence,
  each type-1 copy sends at most one message. It remains to show that
  no two type-1 copies use the same real edge. Assume that there were
  two type-1 copies of $v$ that transmit over the real edge
  $(v,w)$. This would imply that both type-1 copies have received a
  message from the type-2 copy of $w$ over the real edge
  $(w,v)$. However, we have shown above that each real edge is used at
  most once by the type-2 nodes, which contradicts our
  assumption. Thus, in one round each real edge transmits at most one
  message from a type-1 and one from a type-2 node, resulting in two
  messages per edge. As we operate in the V-$\mathcal{CONGEST}$ model,
  two messages per edge results in $\O(\Delta)$ time per round.
\end{proof}

It follows from \cite{l-spamis-86} that $\O(\log n)$ rounds are
sufficient to compute a maximal matching with high probability.

\begin{corollary}
  \label{cor:matching-runtime}
  Our distributed randomized matching algorithm computes a maximal
  matching in $\H_i$ in~$\O(\Delta\log n)$ time.
\end{corollary}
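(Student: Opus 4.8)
The plan is to obtain the total running time as the product of two quantities that are each already available: the number of rounds performed by the matching algorithm, and the time a single round costs in the V-$\mathcal{CONGEST}$ model. The former follows from the analysis of Luby's algorithm, and the latter is exactly the content of \cref{lem:matching-edge-used-leq-twice}, so the corollary is a direct combination of the two.

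First I would recall that the matching procedure described in \cref{sec:matching} is a Luby-style randomized scheme: in each round the active type-2 nodes propose their highest-numbered incident edge, each type-1 node accepts its highest-numbered incoming proposal, and the matched nodes together with their edges become inactive. Such a maximal-matching computation is precisely a maximal independent set computation in the line graph of $\H_i$, so the standard analysis of \cite{l-spamis-86} applies and guarantees that after $\O(\log n)$ rounds all edges of $\H_i$ are deactivated and a maximal matching is obtained with high probability. The only point that needs checking here is that neither the line-graph reduction nor the high-probability round bound is affected by the fact that $\H_i$ is a virtual graph simulated on top of $G$; since every decision of the algorithm depends only on the random edge numbers and on the adjacency structure of $\H_i$, and not on how $\H_i$ is realized, the $\O(\log n)$ round count carries over verbatim.

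Next I would invoke \cref{lem:matching-edge-used-leq-twice}, which shows that a single round of the algorithm transmits at most two messages over each real edge and hence can be realized in $\O(\Delta)$ time slots in the V-$\mathcal{CONGEST}$ model. Multiplying the $\O(\log n)$ rounds by the $\O(\Delta)$ time slots per round yields the claimed bound of $\O(\Delta\log n)$. The main (and essentially only) subtlety is to ensure that the per-round simulation cost of \cref{lem:matching-edge-used-leq-twice} holds uniformly across all $\O(\log n)$ rounds rather than for a single round in isolation; this is immediate, since that lemma bounds the edge usage of an arbitrary round of the propose/accept scheme and its argument does not depend on which round is currently being executed. Combining the uniform per-round cost with the round bound therefore gives the stated running time.
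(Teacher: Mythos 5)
Your proof is correct and follows exactly the route the paper takes: the $\O(\log n)$ round bound from Luby's analysis \cite{l-spamis-86} combined with the $\O(\Delta)$ per-round simulation cost from \cref{lem:matching-edge-used-leq-twice}, multiplied together. The extra care you take about the line-graph reduction and the uniformity of the per-round bound is sound but not needed beyond what the paper already states.
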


This implies that we can find $\Omega(k)$ long connector paths for all
components of one class that have less than $k/2$ short connector
paths in time $\O(\Delta \log n)$. As we have $t = \Theta(k)$ classes,
this results in $\O(k\Delta\log n)$ for
Step~B.\ref{algo:find-long-conpaths} on each layer.  Let us now prove
that the long connector paths can indeed be used to connect the
components with at least constant probability on each layer.

\subsection{From long connector paths to connected components}
\label{sec:from-connector-paths}

As components with at least $k/2$ short connector paths are connected
using those connector paths, we keep focusing on components with at
least $k/2$ long connector paths. In the previous section we showed
how to find $\Omega(k)$ vertex-disjoint long connector paths for each
such component. As type-1 nodes already selected a random class to
connect those components that have a sufficient number of short
connector paths, the class of the type-2 nodes on the current layer
remains to be selected.  Since each type-2 node lies on at most one
long connector path per class, there are at most $t$ long connector
paths per type-2 node. On these paths, however, the internal type-1
nodes may have chosen classes that differ from the class of the path.
Intuitively, this means that the path cannot be used to connect two
components of the same class since one of the internal nodes has
already picked the wrong class. Therefore, as described in
Step~\algoref{algo:type-2-random}, the type-2 nodes discard these long
connector paths and select the class of one of the remaining paths at
random. If no long connector paths remains, the node selects a random
class.

We show in this section that this is sufficient to guarantee that a
constant fraction of the components are connected with constant
probability. Let us consider an arbitrary component $\C$ of class
$i$. There are two challenges. The first is to show that each
connector path connects to another component of the same class with
probability in the order of $1/k$. This is non-trivial, as the type-1
node on each long connector path already selected a random class,
which upper bounds the probability by $1/t$. The second challenge is,
that the events that two type-2 nodes on different connector paths of
$\C$ selecting class $i$ are not necessarily independent. This can be
circumvented by using a tail bound, once the probability for each
event is upper and lower bounded independently of the outcome of other
events.  Let us now state the result.  The proof is based on
\cite{chgk-dcd-14} with some modifications. We sketch the main ideas
in \cref{sec:lcp-good}.
\begin{lemma}
  \label{prop:long-paths-constant-fraction}
  Given a component $\C$ of class $i$ on an upper layer $l$ with
  $\Omega(k)$ long connector paths. The probability that one of the
  long connector paths is good is at least $\delta$.
\end{lemma}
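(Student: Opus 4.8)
The plan is to first pin down what \emph{good} means. I would call a long connector path $P=(s,v,w,u)$ with $s\in\C$, $v$ of type-2 and $w$ of type-1 \emph{good} if both internal nodes end up in class $i$: the type-1 node $w$ picks class $i$ in Step~\algoref{algo:type-1-random} and the type-2 node $v$ selects the class-$i$ path in Step~\algoref{algo:type-2-random}. In that case $s,v,w,u$ is entirely of class $i$ and genuinely joins $\C$ to the distinct class-$i$ component containing $u$. Fix the $m=\Theta(k)$ internally vertex-disjoint long connector paths $P_1,\dots,P_m$ provided by \cref{lem:max-matching-k-paths-per-component}, write $P_j=(s_j,v_j,w_j,u_j)$, let $X_j$ be the indicator that $P_j$ is good, and set $X=\sum_j X_j$. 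Since the paths are internally vertex-disjoint, the type-1 nodes $w_1,\dots,w_m$ are pairwise distinct and so are the type-2 nodes $v_1,\dots,v_m$; this disjointness is the structural fact I would exploit throughout.

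For the first challenge (a single path is good with probability $\Omega(1/k)$ despite the type-1 node's prior random choice) I would factor $\mathbb{P}[X_j=1]=\mathbb{P}[w_j=i]\cdot\mathbb{P}[v_j\text{ picks the class-}i\text{ path}\mid w_j=i]$. The first factor is $1/t$ since $w_j$ chooses uniformly among the $t$ classes. Conditioned on $w_j=i$ the class-$i$ path is among $v_j$'s remaining paths, and $v_j$ picks it with probability $1/R_j$, where $R_j\ge 1$ is the number of remaining paths at $v_j$. Because $v_j$ lies on at most one path per class, $R_j=1+\sum_{c\ne i}\mathbf{1}[v_j\text{ is on a class-}c\text{ path whose type-1 node chose }c]$, and each summand has probability at most $1/t$, so $\mathbb{E}[R_j\mid w_j=i]<2$. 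Convexity of $x\mapsto 1/x$ then gives $\mathbb{E}[1/R_j\mid w_j=i]\ge 1/\mathbb{E}[R_j\mid w_j=i]>1/2$, whence $\mathbb{P}[X_j=1]\ge 1/(2t)=\Omega(1/k)$.

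For the second challenge (the $X_j$ are correlated) I would avoid multiplying survival probabilities and instead use the second-moment method, $\mathbb{P}[X>0]\ge \mathbb{E}[X]^2/\mathbb{E}[X^2]$. From the previous step $\mathbb{E}[X]=\sum_j\mathbb{P}[X_j=1]\ge m/(2t)=\Omega(1)$. For the cross terms, the key point is that $X_j=1$ forces $w_j=i$, so $X_jX_{j'}\le \mathbf{1}[w_j=i]\,\mathbf{1}[w_{j'}=i]$; as $w_j\ne w_{j'}$ are distinct type-1 nodes choosing independently, $\mathbb{E}[X_jX_{j'}]\le 1/t^2$. Summing yields $\mathbb{E}[X^2]=\mathbb{E}[X]+\sum_{j\ne j'}\mathbb{E}[X_jX_{j'}]\le \mathbb{E}[X]+m^2/t^2=\O(1)$, using $m=\Theta(t)$. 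Combining the two bounds shows $\mathbb{P}[X>0]$ is bounded below by a positive constant $\delta$, which is exactly the assertion that some long connector path is good with probability at least $\delta$.

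I expect the main obstacle to be the dependency bookkeeping in the second step. The type-2 selections are genuinely correlated, since one node's set of remaining paths can hinge on type-1 nodes it shares with another path, so independence cannot simply be assumed. The clean way around it, as hinted in the text, is that each good event is sandwiched — bounded below in expectation by $1/(2t)$ and bounded above pointwise by $\mathbf{1}[w_j=i]$ — so the pairwise covariance is controlled purely by the independent type-1 choices. I would also double-check the two places where this could break: that $R_j$ never vanishes (it does not, because the class-$i$ path is remaining whenever $w_j=i$), and that capping the number of exploited paths at $m=\Theta(k)$ keeps $m^2/t^2=\O(1)$ so that $\mathbb{E}[X^2]$ stays constant.
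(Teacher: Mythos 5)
Your argument is correct, but it reaches the conclusion by a genuinely different route than the paper. The paper (following Censor-Hillel et al.) first introduces an auxiliary \emph{random discard} step at the type-2 nodes, under which the event that both internal vertices of a fixed long connector path pick class $i$ has probability in $[1/4t,1/t]$ \emph{conditionally on} the outcomes at all other connector paths of $\C$; it then argues that the bounds survive the removal of the discard step (which the actual algorithm does not perform), and finishes with a tail bound of the form $(1-1/4t)^{\Omega(k)}\le e^{-\Omega(1)}$. You avoid the discard-step coupling entirely: Jensen's inequality applied to $1/R_j$ gives the unconditional per-path bound $\Omega(1/t)$, and the correlation between paths is absorbed into the second-moment method, with the cross terms dominated pointwise by $\mathbf{1}[w_j=i]\,\mathbf{1}[w_{j'}=i]$, which factorizes because the type-1 nodes of internally vertex-disjoint paths are distinct and choose independently. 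Your version is shorter and more elementary --- you never have to justify that the algorithm behaves as if the discard step were present --- at the price of a quantitatively weaker constant: Paley--Zygmund gives $\delta\approx m/(4(t+m))$, whereas the paper's conditional bounds make the failure probability exponentially small in $m/t$; both suffice for the lemma as stated. Two points you should make explicit: (i) the matched paths $P_1,\dots,P_m$ are determined by the helper graph $\H_i$ and the matching randomness, neither of which depends on the classes chosen by type-1 nodes in Step~\algoref{algo:type-1-random}, so conditioning on the matching leaves the choices of the $w_j$ uniform and independent --- this underlies both your $1/t$ factor and the $1/t^2$ cross-term bound; and (ii) the type-1 node of a class-$c$ path through $v_j$ may coincide with $w_j$ itself, in which case that path is discarded under the conditioning $w_j=i$, so the summand vanishes and your bound $\mathbb{E}[R_j\mid w_j=i]<2$ still holds.
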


%%% Local Variables: 
%%% mode: latex
%%% TeX-master: "fcds-wrapper"
%%% End: 

\section{Conclusion}
\label{sec:conclusion}

The algorithm presented in this work computes a fractional connected
dominating set packing in the $\mathcal{CONGEST}$ model of distributed
computation. It is based on an algorithm by Censor-Hillel, Ghaffari
and Kuhn \cite{chgk-dcd-14,chgk-anpvc-14}, however, our distributed
implementation computes the long connector paths explicitly, instead
of matching components under the assumption that sufficient long
connector paths exist. The runtime of our algorithm is
$\O(\log^2 n (\min\{\frac{n \log n}{k},D + \sqrt{n \log n} \log^* n\}
+ k\Delta))$
, which is beneficial for large networks with moderate density,
particularly if
$k\Delta \in o(\min\{\frac{n \log n}{k} ,D,\sqrt{n \log n} \log^*
n\}\log n)$.
We expect future large-scale wireless sensor networks to satisfy such
conditions.

\vspace{2mm}
\noindent\textbf{Acknowledgements: }
\label{sec:acknowledgements}
Parts of this work are based on the seconds authors BSc thesis
\cite{wolf-ba-14}.  We thank the German Research Foundation (DFG),
which supported this work within the Research Training Group GRK 1194
"Self-organizing Sensor-Actuator Networks".

%%% Local Variables: 
%%% mode: latex
%%% TeX-master: "fcds-wrapper"
%%% End: 

{\small
\bibliography{abbrv,bib}
\bibliographystyle{splncs03}
}

\begin{appendix}
\section{Appendix}
\label{sec:appendix}

\subsection{Example network}
\label{sec:example-network}

Let us consider the following network. The nodes of the network are
distributed evenly on $d$ rings such that each node on a ring is
connected to its neighbors on the ring. Also, $d$ nodes, one from each
of the rings, for am clique. For $d=4$, an example of such a network
is displayed in \cref{fig:example-network}. 
\begin{figure}[h]
  \centering
  \includegraphics[width=0.5\linewidth]{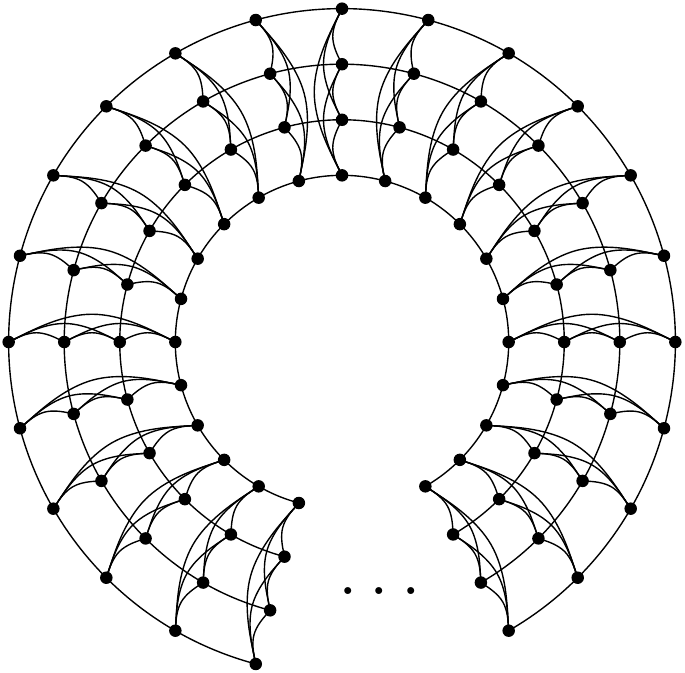}
  \caption{A 5-vertex connected network, with diameter in $\Omega(n)$.}
  \label{fig:example-network}
\end{figure}
We can see that such a network is $d+1$ connected, and has maximum
degree $d+1$. The diameter of the network, however, is in
$\Omega(n/d)$. Let us, for example, consider a network with
$d = \sqrt[3]{n}$. Such a configuration results in
$k \Delta = D = \O(n^{2/3})$.  Thus, the runtime of the proposed
modification yields an improvement over \cite{chgk-dcd-14}.

\subsection{Number of Components decreases}
\label{sec:ml-lemma}

In the following we sketch the proof of
\cref{thm:connect-components-in-layer}. This proves that the number of
components decreases in each iteration by a constant factor with at
least constant probability.

\begin{proof}[Sketch of proof, based on \cite{chgk-dcd-14,wolf-ba-14}]
  To proof the theorem we consider each component and show that the
  component is connected to another component by layer $l+1$ with
  constant probability.  This implies that a constant fraction of the
  existing components are connected by layer $l+1$ with constant
  probability.  Given a component $\C$ of class $i$ on layer $l$ and
  assume class $i$ has at least two components. It holds that $\C$ has
  at least $k$ connector paths, connecting $\C$ to another component
  of class $i$, according to \cref{lem:k-connector-paths}. There are
  two cases: Either at least $k/2$ of the paths are short connector
  paths, or at least $k/2$ of the paths are long connector paths.

  Let us first consider the case of at least $k/2$ short connector
  paths. This is intuitively the easier case, as only one node
  separates $\C$ from another component on $\Omega(k)$ paths.  Recall
  that each layer has two copies of each real node: a type-1 and a
  type-2 node. Let us consider only the type-1 node for now. According
  to \cref{lem:short-paths-constant-fraction} it is sufficient that
  all type-1 nodes select a random class to connect $\C$ to a
  neighboring component with constant probability in this case.
  Intuitively, this holds as $\Omega(k)$ nodes can connect $\C$ with
  another component, and each of these nodes selects one of~$t$
  classes.
  
  For the case of less than $k/2$ short connector paths, it holds that
  there are $\Omega(k)$ long connector paths as in total $k$ connector
  paths exist according to \cref{lem:k-connector-paths}.  We use the
  remainder of this section to prove that a component $\C$ selects a
  long connector path with constant probability. The theorem follows
  with the result stated in \cref{prop:long-paths-constant-fraction}:
  At least one of the connector paths selects the required class with
  at least constant probability.
\end{proof}

\subsection{Enough short connector paths are sufficient}
\label{sec:enough-short-conn}

We state the lemma without a formal proof, which is given as part of
the proof of Lemma~4.4 in \cite{chgk-dcd-14}.

\begin{lemma}[part of Lemma~4.4 in \cite{chgk-dcd-14}]
  \label{lem:short-paths-constant-fraction}
  Given a class $i$ and a component $\C$ of layer $l>L \log n$ with at
  least $k/2$ short connector paths, $\C$ has at least one short
  connector path of class $I$ with probability at least $\delta$.
\end{lemma}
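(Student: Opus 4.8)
The plan is to reduce the statement to one elementary estimate on the independent random class choices made by type-1 nodes. First I would recall, via condition~\ref{cond:cp-5}, that a short connector path $P=(s,v,u)$ with $s\in\C$ and $u\in\C'$ has exactly one internal vertex $v$, and that $v$ is of type-1. Consequently each short connector path is labelled by its single internal type-1 node, and the path becomes one ``of class $i$'' precisely when that node selects class $i$ in Step~\algoref{algo:type-1-random} (its endpoints $s,u$ already lie in components of class $i$ by the definition of a connector path, so the chosen type-1 node then bridges them).

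Next I would use internal vertex-disjointness to turn the $k/2$ paths into $k/2$ \emph{distinct} random trials. By \cref{lem:k-connector-paths} and the standing convention (\cref{sec:preliminaries}) that the connector paths of a component are taken internally vertex-disjoint, the $\geq k/2$ short connector paths of $\C$ have pairwise disjoint internal vertices; since each such path has exactly one internal vertex, this produces $m\geq k/2$ pairwise distinct type-1 nodes $v_1,\dots,v_m$, one per path. Hence the event ``some short connector path of $\C$ is of class $i$'' is exactly the event ``at least one of $v_1,\dots,v_m$ selects class $i$''.

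The probabilistic core is then immediate. In Step~\algoref{algo:type-1-random} every type-1 node picks one of the $t=\Theta(k)$ classes uniformly and independently, so $\P[v_j\text{ selects }i]=1/t$ for each $j$, and the distinctness of the $v_j$ makes these events mutually independent. Therefore the failure probability is $\P[\text{no short path of class }i]=(1-1/t)^m\leq(1-1/t)^{k/2}\leq e^{-k/(2t)}$. Because $t=\Theta(k)$, the exponent $k/(2t)$ is a positive constant, so this quantity is a constant strictly below $1$; taking the complement yields a constant lower bound $\delta>0$ as claimed.

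The only step demanding genuine care is the distinctness/independence reduction of the second paragraph: the bound $(1-1/t)^{k/2}$ is justified only because the $k/2$ paths can be chosen internally vertex-disjoint, so that no single type-1 node is responsible for several of them and the trials really are independent. Once that is secured via \cref{lem:k-connector-paths}, the remaining estimate $1-(1-1/t)^{k/2}\geq\delta$ is a routine application of $1-x\leq e^{-x}$ together with $t=\Theta(k)$.
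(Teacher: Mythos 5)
Your proof is correct and matches the argument this paper relies on: the paper states the lemma without a formal proof, deferring to Lemma~4.4 of \cite{chgk-dcd-14}, and its own sketch in \cref{sec:ml-lemma} (``$\Omega(k)$ nodes can connect $\C$ with another component, and each of these nodes selects one of $t$ classes'') is exactly your reduction. Identifying each short connector path with its single type-1 internal vertex (condition~\ref{cond:cp-5}), using internal vertex-disjointness to obtain $m \ge k/2$ distinct type-1 nodes with independent uniform class choices, and bounding the failure probability by $(1-1/t)^{k/2} \le e^{-k/(2t)}$, which is bounded away from $1$ because $t = \Theta(k)$, is precisely the intended argument.
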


\subsection{Number of connector paths for type-2 node}
\label{sec:numb-conn-paths}

\begin{lemma}[Proposition~4.2 in \cite{chgk-anpvc-14}]
  \label{lem:type-2-node-on-leq-one-path}
  For an arbitrary class $i$ and a type-2 node $v$, $v$ lies on at
  most one long connector path of one component of $\C$.
\end{lemma}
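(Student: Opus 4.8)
The plan is to fix a class $i$ and a type-2 node $v$, and to show that the component which $v$ can help to connect is uniquely determined, so that $v$ can occur as an internal vertex of a long connector path for at most one component of class $i$; summing over the $t=\Theta(k)$ classes then yields the bound of at most $t$ long connector paths through $v$ that the algorithm relies on. The whole argument hinges on the asymmetry built into the definition of a long connector path: by condition~\ref{cond:cp-4}) the type-2 internal vertex of a long connector path $P=(s,v,w,u)$ is always the one adjacent to the component $\C$ that $P$ connects, while the type-1 vertex is the one adjacent to the far component $\C'$.

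First I would note that, since $v$ is a type-2 node, whenever $v$ is an internal vertex of a long connector path it must occupy the type-2 slot of that path (the type-1 slot can be filled only by a type-1 node, again by condition~\ref{cond:cp-4})). Hence $v$ is always the internal vertex adjacent to the component the path connects, and in particular $v$ has a neighbor in that component, as in condition~\ref{cond:hi-2}). Second, I would invoke the fact that a genuinely long connector path through $v$ requires $v$ to have no neighbor in any component of class $i$ other than this one; otherwise $v$ would itself provide a short connector path to that other component and the path would not be long. This is precisely condition~\ref{cond:hi-3}) used in the construction of $\H_i$, and it is the condition underlying the one-to-one correspondence of \cref{lem:hi-edge-equals-connector-paths}.

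Combining the two observations, the class-$i$ neighbors of $v$ all lie in a single component, so the component that $v$ can serve is forced to be the unique class-$i$ component containing these neighbors. No second component $\C''$ of class $i$ can admit a long connector path through $v$, since that would require $v$ to have a neighbor in $\C''$ as well, contradicting the previous sentence. This establishes the statement for each class separately, and over all $t$ classes it gives the at-most-$t$ bound used in Step~B.

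The main obstacle I anticipate is matching the precise non-shortenability hypothesis to what the uniqueness needs: the definition in \cref{sec:preliminaries} forbids the type-2 vertex only from having a neighbor in the specific target component $\C'$ (condition~\ref{cond:cp-3})), whereas the argument needs the stronger statement that it has no neighbor in any other class-$i$ component. I would bridge this gap exactly as above: a neighbor of $v$ in some third class-$i$ component $\C''$ would yield a short connector path $(s,v,\cdot)$ from $\C$ to $\C''$, so $\C$ would already be connectable without a long path. Hence for the components that actually rely on long connector paths the stronger condition~\ref{cond:hi-3}) is in force, and the uniqueness argument goes through.
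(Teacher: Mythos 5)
The paper never proves this lemma: it is imported verbatim as Proposition~4.2 of \cite{chgk-anpvc-14}, so there is no in-paper proof to compare yours against. Your core idea is the right one — by condition~\ref{cond:cp-4}) a type-2 node can only occupy the internal slot adjacent to the near component, and minimality forces all of $v$'s class-$i$ neighbours into a single component, which pins down the unique component $v$ can serve. However, your bridge from the literal condition~\ref{cond:cp-3}) (no neighbour of $v$ in the specific target $\C'$) to the stronger statement (no neighbour in \emph{any} other class-$i$ component) is not sound as written. A neighbour $y$ of $v$ in a third component $\C''$ gives $\C$ \emph{one} short connector path, but a component resorts to long connector paths whenever it has fewer than $k/2$ short ones, and such a component may still possess several short connector paths; so ``$\C$ would already be connectable without a long path'' does not follow. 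The correct justification is local to the path itself: $(s,v,y)$ is a connector path for $\C$ obtained by short-cutting $P=(s,v,w,u)$ at $v$, so $P$ violates the non-shortenability requirement under its intended reading — which is exactly the reading the paper adopts in the proof of \cref{lem:hi-edge-equals-connector-paths} and encodes as condition~\ref{cond:hi-3}) of $\H_i$. You cite that condition, but you should use it (or the short-cut argument) as the justification rather than the ``$\C$ no longer needs long paths'' claim.

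Separately, your argument only establishes that long connector paths of at most one component of class $i$ pass through $v$, not that at most one such \emph{path} does. A type-2 node adjacent to $\C$ and to two eligible type-1 neighbours $w_1,w_2$ lies on two long connector paths of $\C$; this is precisely why $v_\C$ may have several incident edges in $\H_i[\C]$ and why a matching is computed at all. To reach the per-class count of one that Step~\algoref{algo:type-2-random} and \cref{obs:type-2-node-leq-delta-components} rely on, you must additionally invoke the standing convention (stated after \cref{lem:k-connector-paths}) that the connector-path family of each component is taken internally vertex-disjoint, so that $v$ is internal to at most one path of the single component it can serve.
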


Which implies the following Observation.

\begin{observation}
  \label{obs:type-2-node-leq-delta-components}
  Each type-2 node lies on at most $t \in \Theta(k)$ long connector paths.
\end{observation}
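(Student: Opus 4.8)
The plan is to obtain this bound directly from the per-class estimate \cref{lem:type-2-node-on-leq-one-path} by summing over the $t = \Theta(k)$ classes. First I would record the structural fact that every long connector path carries a well-defined class label: by the connector-path definition in \cref{sec:preliminaries}, a long connector path $P = (s,v,w,u)$ joins a component $\C$ to a \emph{distinct} component $\C'$ of the \emph{same} class $i$ (Condition~\ref{cond:cp-1} together with the requirement $\Psi(\C)\cap\Psi(\C')=\emptyset$ for components of one common class). Hence the collection of all long connector paths passing through a fixed type-2 node $v$ partitions according to this class label into $t$ (possibly empty) groups, one per class.

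Next I would bound each group separately. Applying \cref{lem:type-2-node-on-leq-one-path} (Proposition~4.2 of \cite{chgk-anpvc-14}) to an arbitrary class $i$ gives that $v$ lies on at most one long connector path of class $i$. This is precisely what Conditions~\ref{cond:hi-2}) and~\ref{cond:hi-3}) in the definition of $\H_i$ enforce: all of $v$'s class-$i$ neighbours must lie in a single component $\C$, so $v$ can serve as the type-2 internal vertex of at most one long connector path of class $i$. Summing this per-class bound over all $t\in\Theta(k)$ classes then yields that $v$ lies on at most $t$ long connector paths in total, which is exactly the claimed bound.

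The argument is essentially bookkeeping, so I do not expect a genuine obstacle; the only point that deserves a line of care is that the count is taken over \emph{structurally defined} paths (determined by $v$'s adjacencies to existing components), and is therefore independent of the class that $v$ itself eventually selects in Step~\algoref{algo:type-2-random}. Thus the bound $t$ holds uniformly, before and after that later random choice, which is what lets \cref{obs:type-2-node-leq-delta-components} feed cleanly into the runtime and matching analyses.
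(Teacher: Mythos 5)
Your proposal is correct and matches the paper's own reasoning: the paper presents \cref{obs:type-2-node-leq-delta-components} as an immediate consequence of \cref{lem:type-2-node-on-leq-one-path} (``at most one long connector path per class''), summed over the $t \in \Theta(k)$ classes, which is exactly the partition-by-class-and-sum argument you give. Your additional remark that the count is structural and independent of the class $v$ later selects is a harmless elaboration, not a deviation.
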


\subsection{Selected long connector paths are good}
\label{sec:lcp-good}

Let us sketch the proof of \cref{prop:long-paths-constant-fraction} in
the following.  The analysis is structured in three parts. The first
part considers how likely it is that a type-2 node $v$ selects another
class, given that the corresponding type-1 node is of the correct
class. In order to bound this probability, a random discard step for
long connector paths is introduced. This allows to show that all other
possible long connector paths through $v$ are discarded with constant
probability.  This results in an probability in the interval
$[1/4t,1/t]$ for the event that both internal vertices of a long
connector path of $\C$ select class $i$---independent of the class
selected by nodes on other long connector paths of $\C$.  In the
second part it is shown that the proven bounds hold even if the random
discard step is not used (this is required, as it is not used in the
algorithms).  The third and final step uses the independent bounds on
the probability of a long connector path to select class $i$ with a
tail bound to show that for at least one of the $\Omega(k)$ long
connector paths both internal vertices selected the same class $i$.

For more technical details we refer to
\cite{chgk-dcd-14}. All required adaptions are outlined Section~4.2 of
\cite{wolf-ba-14}.

\subsection{Identifying Connected Components}
\label{sec:ident-conn-comp}

To identifying and communicate through connected component, we use the
protocol described in \cite[Theorem B.2]{chgk-dcd-14}.  There are two
protocols that can be used, depending on the maximum diameter $D'$ of
the components in the virtual graph maximum diameter $D'$ of the
components, which is in $\O(\frac{n \log n}{k})$ whp. If it is
relatively small, i.e.
$\frac{n \log n}{k} = o(D+\sqrt{n \log n}\log^* n)$, a simple protocol
can be used, while a variation of a protocol to identify connector
components by Thurimella \cite{t-sldas-97} is used otherwise. 

Let us now consider the simpler variant. Each node transmits its
class, and the smalles node id it received so far (including its
own). Nodes discard received ids if they are transmitted by nodes
with different classes. After $D' = \O(\frac{n \log n}{k})$ rounds,
each node in each component received the smallest id of the component,
which is selected as the component id and the components root
node. The union of paths from the root to nodes of the components can
be used as communication tree in the component.

The more complex protocol, which is a variation of the algorithm to
identify connected components by Thurimella \cite{t-sldas-97} is
originally based on an minimum spanning tree (MST) algorithm by Garay,
Kutten and Peleg \cite{gkp-astda-98}, which was improved to the
current runtime bound by a new MST algorithm in
\cite{kp-fscsds-98}. The protocol allows each node in a network to
learn the smallest id in its component in $\O(D+\sqrt{n} \log^* n)$
rounds. The id of each virtual node $v_l$ (of layer $l$) is set to
(id$_v$, $l$, type), where id$_v$ is the id of the corresponding real
node, $l$ the virtual nodes layer, and type its type (either 1 or
2). The algorithm by Thurimella is executed on $\G$, which has a
diameter in $\O(D)$, and $\O(n \log n)$ nodes, resulting in $\O(D +
\sqrt{n\log n} \log^* n)$ rounds for identifying the connected components.

% \section{Omitted Proofs}
% \label{sec:omitted-proofs}

\end{appendix}

%%% Local Variables: 
%%% mode: latex
%%% TeX-master: "fcds-wrapper"
%%% End: 

\end{document}